\definecolor{mgreen}{rgb}{0,0.2,0.2}
\definecolor{mblue}{rgb}{0,0,0.8}
\definecolor{mred}{rgb}{0.8,0,0}
\newcommand{\txblue}[1]{{\color{black}{#1}}}
\newcolumntype{L}[1]{>{\raggedright\let\newline\\\arraybackslash\hspace{0pt}}m{#1}}
\newcolumntype{C}[1]{>{\centering\let\newline\\\arraybackslash\hspace{0pt}}m{#1}}
\newcolumntype{R}[1]{>{\raggedleft\let\newline\\\arraybackslash\hspace{0pt}}m{#1}}
\newcommand{\rrn}{r_r}
\newcommand{\erfc}{\mathrm{erfc}\,}
\newcommand{\rxi}[1]{\mathrm{Rx}#1}
\newtheorem{theorem}{Theorem}
\newtheorem{corollary}[theorem]{Corollary}
\newtheorem{lemma}[theorem]{Lemma}
\begin{document}

%\title{Molecular MIMO: Theoretical Feasibility and Prototype Results}
\title{Molecular MIMO: From Theory to Prototype}

\author{\IEEEauthorblockN{Bon-Hong~Koo,~\IEEEmembership{Student~Member,~IEEE,} Changmin~Lee,~\IEEEmembership{Student~Member,~IEEE,} H.~Birkan~Yilmaz,~\IEEEmembership{Member,~IEEE,} Nariman Farsad,~\IEEEmembership{Member,~IEEE,} Andrew Eckford,~\IEEEmembership{Senior~Member,~IEEE,}~and\\ Chan-Byoung Chae,~\IEEEmembership{Senior~Member,~IEEE}}
\thanks{B. Koo, C. Lee, H. B. Yilmaz and C.-B. Chae (Corresponding Author) are with the Yonsei Institute of Convergence Technology, School of Integrated Technology, Yonsei University, Korea. Email: \{harpeng7675, cm.lee, birkan.yilmaz, cbchae\}@yonsei.ac.kr. N.~Farsad is with Stanford University, USA. Email: nfarsad@stanford.edu. A.~Eckford are with the Department of Electrical and Computer Science, York University, Canada. Email: aeckford@cse.yorku.ca. }
\thanks{This research was supported in part  by the MSIP (Ministry of Science, ICT and Future Planning), Korea, under the ``IT Consilience Creative Program" (IITP- 2015-R0346-15-1008) supervised by the IITP (Institute for Information \& Communications Technology Promotion) and by the Basic Science Research Program (2014R1A1A1002186) funded by the Ministry of Science, ICT and Future Planning (MSIP), Korea, through the National Research Foundation of Korea.}
\thanks{Full demo video is available at http://www.cbchae.org/.}
}

%\author{\IEEEauthorblockN{Bon~Hong~Koo}
%\IEEEauthorblockA{School of Integrated Technology\\
%Yonsei University, Korea\\
%Email: harpeng7675@yonsei.ac.kr}
%\and
%\IEEEauthorblockN{H. Birkan~Yilmaz}
%\IEEEauthorblockA{School of Integrated Technology\\
%Yonsei University, Korea\\
%Email: birkan.yilmaz@yonsei.ac.kr}
%\and
%\IEEEauthorblockN{Chan-Byoung~Chae}
%\IEEEauthorblockA{School of Integrated Technology\\
%Yonsei University, Korea\\
%Email: cbchae@yonsei.ac.kr}
%}

% make the title area
\maketitle

\begin{abstract}
%In diffusion-based molecular communications, the channel is governed by diffusion through a fluid medium. The data rates for such communications are, compared to the radio frequency communication system, extremely low. To mitigate this problem, we propose a novel design for molecular communication that utilizes multiple antennas (in RF communication this corresponds to antenna) at both the transmitter and molecular detectors at the receiver. We simulate the system with a one-shot signal to obtain the channel's  impulse response. We then incorporate this result within our mathematical analysis to determine inter-link interference (ILI) and inter-symbol interference (ISI). For the cases of incomplete information regarding the system and the channel state, low complexity symbol detection methods are preferred since the receiver is supposed to be small and simple. Thus we propose four detection algorithms---adaptive thresholding, practical zero forcing with channel models excluding/including the ILI and ISI, and Genie-aided zero forcing. We verify the proposed system via extensive numerical/analytical evaluations and a novel macro-scale testbed. 
In diffusion-based molecular communication, information transport is governed by diffusion through a fluid medium. The achievable data rates for these channels are very low compared to the radio-based communication system, since diffusion can be a slow process. To improve the data rate, a novel multiple-input multiple-output (MIMO) design for molecular communication is proposed that utilizes multiple molecular emitters at the transmitter and multiple molecular detectors at the receiver (in RF communication these all correspond to antennas). Using particle-based simulators, the channel's impulse response is obtained and mathematically modeled. These models are then used to determine inter-link interference (ILI) and inter-symbol interference (ISI). It is assumed that when the receiver has incomplete information regarding the system and the channel state, low complexity symbol detection methods are preferred since the receiver is small and simple. Thus four detection algorithms are proposed---adaptive thresholding, practical zero forcing with channel models excluding/including the ILI and ISI, and Genie-aided zero forcing. The proposed algorithms are evaluated extensively using numerical and analytical evaluations. %\txrev{A novel macro-scale testbed implemented one of the algorithms to evaluate the results.}
\end{abstract}

\begin{IEEEkeywords} 
molecular communication via diffusion, multiple-input multiple-output, interference, Brownian motion, 3-D simulation, symbol detection algorithm, molecular communication testbed.
\end{IEEEkeywords}

\IEEEpeerreviewmaketitle

\section{Introduction} 
\IEEEPARstart{T}{iny} devices that exhibit strong cooperation via communication on a small scale are capable of having an impact at the macro-scale. In fact, communication on a small scale (micro and possibly nano) is at a critical juncture, where there is a need to engineer communication systems at these scales~\cite{nakano2013molecularC}. A new paradigm called {\em nanonetworking}, focuses on this problem.
%that has emerged to engineer communication on such a scale is nanonetworking. 
At the micro and nano scale, electromagnetic communication is challenging because of constraints such as the ratio of the antenna size to the wavelength of the electromagnetic signal~\cite{akyildiz2008nanonetworksAN,nakano2013molecularC,farsad2014comprehensiveSO_ARXIV}. Thus taking their cue from nature, researchers have focused their attention on molecular communication where information is carried through chemical signals
%that uses chemicals as information carriers
~\cite{atakan2012bodyAN, nakano2012molecularCA,kim2013novelMT,kuran2010energyMF}. In the literature, numerous molecular communication systems have been proposed: molecular communication via diffusion (MCvD) (such as calcium and pheromone signaling), motor proteins motility over microtubules, microtubule motility over stationary motor proteins, and  bacterium-based communication~\cite{nakano2013molecularC,kuran2010energyMF,far15NANO,lio2012opportunisticRT}.  

An MCvD system consists of three main components: a transmitter, a fluid environment between the transmitter and the receiver, and a receiver. The transmitter and the receiver must have at least one antenna\footnote{The term `antenna' we use in this paper can physically be a bulge. In Fig.~\ref{Fig:Topology_model}, bulges are presented as the extensions from the transmitter and the receiver node for interacting with the physical environment,and represent the molecular emitters and detector.} to send and receive chemical signals. To convey information, a transmitter node emits molecules into the medium via its transmit antennas. The emitted molecules then propagate through the environment according to the physical characteristics of the medium and molecules. The propagation is generally considered to be restricted to diffusion, unless, as in~\cite{srinivas2012molecularCI,kim2014symbolIO}, the environment has flow. Some of the propagating molecules arrive at the receive antenna or antennas where they trigger a reception process.
%and there trigger a reception process at the receive antenna. 
A macro-scale counterpart to such a system was introduced by Farsad \emph{et al.} in \cite{farsad2013tabletopMC} as a tabletop testbed, and  the nonlinearity in the channel was modeled as Gaussian noise~\cite{farsad2014channelAN}.

Some of the system models in the literature, assume that molecules are removed from the environment after hitting the receive antenna~\cite{srinivas2012molecularCI,yilmaz2014_3dChannelCF}. This can be modeled as the first-hitting process. The first-hitting process dictates that each molecule can contribute to the signal only once. On the other hand, some models omit the first-hitting process,  allowing molecules to pass through the receive antenna with no alteration to the movement dynamics (i.e., the molecules can go inside and out of the receiver antenna multiple time)
	~\cite{pierobon2013capacityOA,mahfuz2014strengthBO,noel2014improvingRP}. Furthermore, analytical models were developed for arrival, receptor dynamics, degradation, and noise~\cite{yilmaz2014arrivalMF,akkaya2015effectOR,heren2014effectOD,noel2014improvingRP,pierobon2011diffusionBN}.

From a communications perspective, the fundamental processes of an MCvD system are modulation, emission, signal propagation, reception, and demodulation. A circuit-based model of the system was introduced in~\cite{chou2014molecularCN}. In the literature, information is modulated on various aspects of the messenger molecules, such as molecule concentration, molecule identity, concentration ratio, and signal frequency~\cite{srinivas2012molecularCI,kuran2011modulationTF,kim2013novelMT,yilmaz2014simulationSO_SIMPAT}. The main challenge, researchers have pointed out, 
 is %was 
the heavy tail nature of the molecular signal that causes severe inter-symbol interference (ISI) and the slow propagation~\cite{tepekule2014isiMT,noel2014improvingRP}. To increase MCvD performance, the literature suggests many enhancements. These include incorporating ISI mitigation techniques, multiple molecule types, enzymes to reduce ISI, protrusions to increase reception probability, and pre-equalization methods~\cite{kim2014symbolIO,noel2014improvingRP,yilmaz2014simulationSO_SIMPAT,tepekule2014isiMT,genc2013receptionEW,tepekule2015novelPE,meng2014receiverDF}. 

The present capabilities of molecular communications are rather primitive. Indeed, the world's first artificially engineered molecular communication (even with drift) introduced in~\cite{farsad2013tabletopMC,farsad2014channelAN} has a \emph{chemical efficiency} of 0.3 bits/s/chemical over a free-space distance of a few meters. For commercial applications, such a rate might be too slow. A good candidate for resolving this problem is the utilizing of the multiple-input multiple-output (MIMO) technique~\cite{meng2012mimoCB}. The authors in~\cite{meng2012mimoCB}, however, mainly focused \txblue{on} multiuser interference and assumed ISI to be negligible (i.e., the authors \txblue{paid little attention} to ISI in the system model). 

Therefore, we introduce a MIMO system model that includes ISI and inter-link interference (ILI) and the detection algorithms specific to molecular MIMO systems. To the best of our knowledge, this is \emph{the first work that analytically considers ISI and ILI for a molecular MIMO system}. Its novelty goes beyond the SISO work in that the presence of
additional antennas that interfere with each other brings about the need for a new channel model.
% an additional receiver antenna interferes with the environment and brings about the need for a new channel model. 
%Moreover, 
To demonstrate the practicality of our models, we also present measurement results via a novel tabletop testbed.  

We first model the channel's impulse response by using a channel function similar to the single-input single-output (SISO) system. By utilizing the developed MIMO simulator, we model the channel and obtain the parameters of the modified channel impulse function.\footnote{The fully-featured 3-D molecular MIMO simulator is available at http://www.cbchae.org/.} Consequently, in a molecular MIMO system, we utilize the channel model to model the ILI that originates from the other antenna at the current symbol slot and the ISI from previous emissions. \txblue{The Brownian motion of individual molecules was simulated to generate the channel statistics, and then these statistics were used when simulating the detection algorithms.} We analyze the system performance in terms of bit error rate (BER) and channel model via signal distributions. For this work, we chose to focus on an $2\times2$ molecular MIMO system. However, similar analysis can be 
	applied to
	%done for 
	an $M\times N$ molecular MIMO system with appropriate channel modeling. 

The main contributions of this paper are as follows:
\begin{itemize}
	\item \textbf{Modeling MIMO Systems:} We model molecular MIMO systems that consider ISI and ILI \txblue{by} utilizing the channel response function.
	\item \textbf{Formulating ISI and ILI:} We derive formulations for ISI and ILI.
	\item \textbf{Molecular MIMO Signal Detection Algorithms:} Utilizing a single type of 
	molecule,
	%molecules, 
	we propose detection algorithms specific to molecular MIMO systems and analyze the performance in terms of bit error rate (BER).
	\item \textbf{Tabletop Molecular MIMO Testbed:} We implement our proposed algorithms on our 
	macro-scale tabletop molecular MIMO testbed
	%tabletop testbed for molecular MIMO at the macro-scale 
	and perform measurements to evaluate the system's performance.
\end{itemize}

The rest of this paper is organized as follows. In Section~\ref{model}, we describe the system model, including topology, propagation, and communication model. In Section~\ref{fit_n_algorithm}, we detail the channel estimation method with channel modeling. Section~\ref{Sec:Theory} presents the theoretical analysis \txblue{of} the proposed system. Hardware realizations of the theoretical works are covered in Section~\ref{Sec:Testbed}. Section~\ref{Sec:results} shows the numerical results and Section~\ref{Sec:conclusion} presents our conclusions.

\section {System Model}
\label{model}
The molecular communication system considered in this paper is a 3-D environment with two point sources and two spherical receive antennas. The transmitter releases a certain number of messenger molecules at once without a biased direction. The emitted molecules travel via diffusion inside the fluid medium. We ignore the effect of collisions among diffusing molecules and assume that the enzymatic reaction is treated as an additive noise in the theoretical analysis. When a molecule hits the boundary of one of the spherical antennas, it is immediately absorbed by the receiving antenna and removed from the medium. We assume that the transmitter-receiver pair has synchronized symbol slots and the receiver can, during a symbol slot, count the number of received molecules.

During communications, the transmit antennas convey independent messages to their corresponding receive antennas. Both links of the molecular MIMO system use the same type of molecule. Hence the molecules from the other transmit antenna may cause ILI.
%Also, there can be molecules that are not absorbed by a receiver during its symbol duration but absorbed at a later duration, process known as ISI.

\subsection{Topology and Propagation Model}
\label{Sec:topology}

\begin{figure}
	\centering
	\includegraphics[width=0.99\columnwidth,keepaspectratio]
	{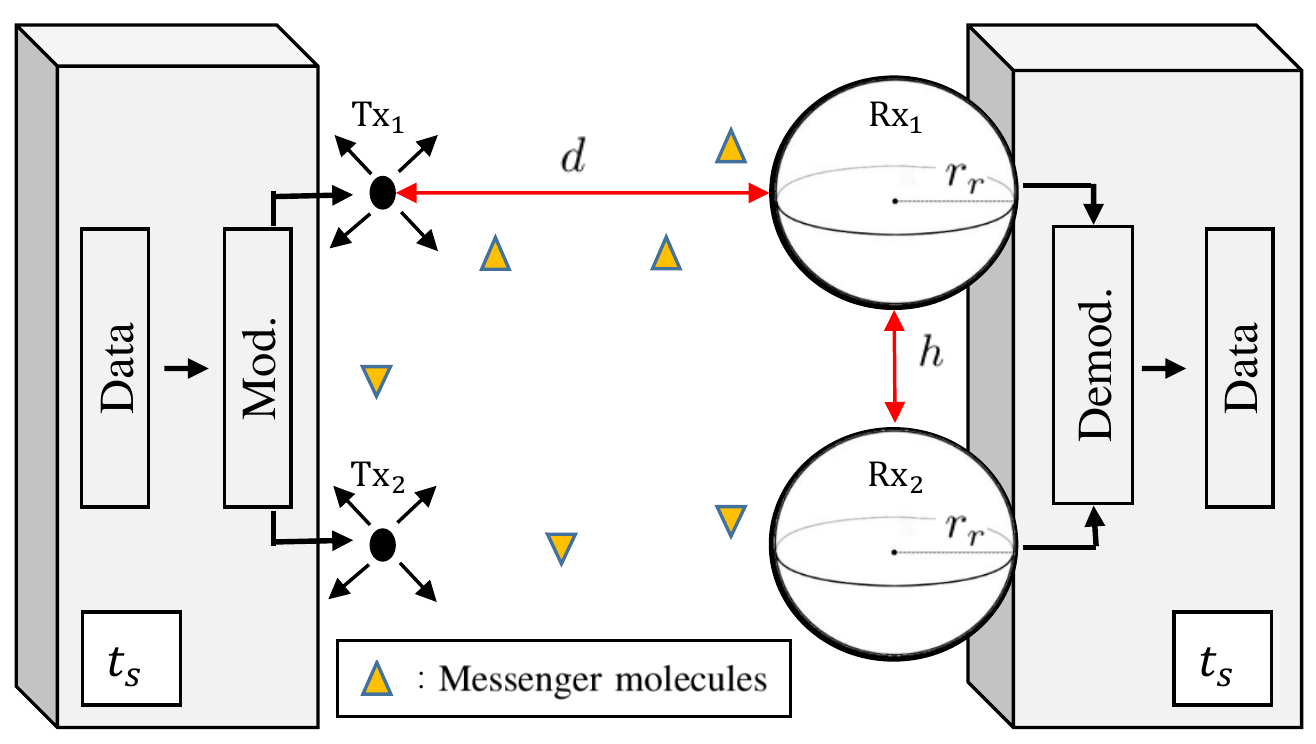}
	\caption{Topological model of a molecular $2\times2$ MIMO system. The modulation process individually modulates the information on to a physical property of the messenger molecules. On the other hand, the demodulation process demodulates the received signal separately.} %\txrev{Capability of the communication units are conceptually described in the figure as a block diagram.}}
	\label{Fig:Topology_model}
\end{figure}

As shown in Fig.~\ref{Fig:Topology_model}, there are two point antennas Tx$_1$ and Tx$_2$ at transmitter Tx, placed $d$ distance apart from the corresponding spherical antennas Rx$_1$ and Rx$_2$ at the receiver Rx. The two receive antennas have the same radius $\rrn$, and are placed $h$ distance apart. \txblue{Their centers and those of the transmit antennas form a rectangle on a plane.} The spherical antennas are attached to the receiver body and we assume that just the antennas are capable of receiving molecules, a tenable assumption given the many examples found in nature. For example, epithelial cells, neurons, and migrating cells are polarized cells having heterogeneous receptor deployments \cite{luddens1995receptor}, an adaptation to the environment and the signaling mechanism. After the reception, the receiver demodulates the signal using a corresponding algorithm. The candidate algorithms are presented in Section~\ref{Sec:Detection_algorithms}. The transceiver pair is assumed to have synchronized time duration between consecutive symbols denoted by $t_s$. Many examples of synchronization among bacteria are observed in nature and inspired from it. For example, a simple synchronization algorithm for communicating pairs was proposed in~\cite{shah2013blindSI}.

When a molecule is released from a point source, its movement in a fluid is governed by diffusion and drift (drift is applicable if there is a flow). The dynamics of diffusion can be described by Brownian motion. The \txblue{expected} fraction of absorbed molecules with respect to time, $F(t|r_r,d,D)$, is derived in \cite{yilmaz2014_3dChannelCF} when the receiver is a spherical body and the transmitter is a point source in a 3-D environment. The derived formula determines \txblue{the expected number of hitting molecules and hence} the channel characteristics. The formula is as follows:
% % % % % % % % % % % % % % % % % % % % % % %
\begin{equation}
\label{eqn_first_hitting_3d}
F(t|\rrn,\, d,\, D)= \frac{\rrn}{\rrn+d}\, \erfc\left( \frac{d}{\sqrt{4Dt}} \right)
\end{equation}
where $D$ denotes the diffusion coefficient. The two receive antennas in our setup, however, prevent the direct use of (\ref{eqn_first_hitting_3d}) since hitting Rx$_1$ and Rx$_2$ become dependent events. The closed form expression for the MIMO setup is still an open problem. Therefore, for channel modeling, we simulate the movement of released particles within the given MIMO setup by using
% %---------------- EQUATION
\begin{align}
\begin{split}
(x_{t}, y_{t}, z_{t}) &= (x_{t-\Delta t}, y_{t-\Delta t}, z_{t-\Delta t}) + (\Delta x, \Delta y, \Delta z),\\
\Delta x & \sim \mathcal{N}(0, 2D\Delta t), \\
\Delta y & \sim \mathcal{N}(0, 2D\Delta t), \\
\Delta z & \sim \mathcal{N}(0, 2D\Delta t) 
\end{split}
\label{eqn_propagation_model}
\end{align}
where $x_{t}$, $y_{t}$, $z_{t}$, \txblue{$\Delta t$,} and $\mathcal{N}(\mu, \sigma^2)$ are the particles' positions at each dimension at time $t$, \txblue{the simulation time step,} and the normal distribution with mean $\mu$ and variance $\sigma^2$. The Brownian motion simulator for the MIMO setup is a modified version of the  simulator that was developed for a SISO case in~\cite{yilmaz2014simulationSO_SIMPAT}. During its trip, if a molecule hits one of the spherical receive antennas, then it is absorbed and removed from the environment. A molecule can contribute to the signal just once and we have four different $F(t|r_r, d, D)$ values depending on the molecule's emission source and hitting antenna.  We use the simulation data to formulate $F(t|r_r, d, D)$ for the 2$\times$2 molecular MIMO setup and utilize it for the analysis. \txblue{Note that we have two different $F(t|r_r, d, D)$ values due to symmetry, which enables us to combine equivalent ones to have better estimates from simulation data. }

 \subsection{Communication Model}
\label{communication}	

\txblue{Each receive antenna has the closer transmit antenna as its pair source of communication, while the other transmit antenna becomes a non-pair source.} To encode information, we use the binary concentration shift keying (BCSK) modulation technique~\cite{kim2013novelMT}. We denote $Q_1$ and $Q_0$ as the number of molecules released to send bit-1 and ${\mbox{bit-0}}$. In this paper, the transmitter sets ${Q}_{0}$ as zero to reduce the energy consumption and separate, as much as possible, the signal amplitudes. The transmitter has independent sets of bits $x_1$ and $x_2$ for their own messages. During the $m^{\mathrm{th}}$ symbol, Tx$_1$ and Tx$_2$ send $x_1[m]$ and $x_2[m]$ each by releasing ${Q}_{1}\cdot{x}_{1}[m]$ and ${Q}_{1}\cdot{x}_{2}[m]$ molecules at the start of the symbol time, and wait until the next emission time. \txblue{This representation is valid for both $x_i[m]=0$ and $x_i[m]=1$, since $Q_{0}=0$.} The number of molecules absorbed at the receiver follows a binomial distribution with a hitting probability, which is related to $t_s$, $d$, $r_r$, and $D$~\cite{kuran2010energyMF}. We define $F_{ij}(t_1, t_2)$ as the probability of a molecule hitting Rx$_i$ when the molecule is  released from a Tx$_j$ \txblue{between time $t_1$ and $t_2$ after the transmission}. So we can define the random variable $\mathcal{S}_{ij}(t_1,t_2)$ as follows:
\begin{equation}
\label{eq_binom_rv_sij}
\mathcal{S}_{ij}(t_1,t_2) \triangleq \mathcal{B}\left(Q_1,F_{ij}(t_1,t_2)\right),
\end{equation}
where $\mathcal{B}(N,p)$ denotes a binomial random variable with $N$ trials and success probability $p$. $\mathcal{S}_{ij}$ is utilized while evaluating the number of received molecules at Rx$_i$ that originates from Tx$_j$.

The number of molecules received by Rx$_i$ during the $m^{\mathrm{th}}$ time slot is denoted by $y_{\mathrm{Rx}_i}[m]$. The received molecules are composed of molecules emitted from the two transmit antennas, at the current and the previous symbol slots. $y_{\mathrm{Rx}_i}[m]$ can be expressed as the sum of the number of molecules released from \textit{1) the pair source at the current symbol slot}, \textit{2) the pair source at previous symbol slots}, \textit{3) the non-pair source at the current symbol slot}, and \textit{4) the non-pair source at the previous symbol slots}. Considering the events of false-capture or mis-capture\footnote{The false-capture indicates a positive noise that can be occurred by the invasion of outer molecules, which are indistinguishable from the target molecules by the receiver. The mis-capture indicates a negative noise in case of molecular decomposition.} by the receiver that affects the number of receptions at Rx$_i$, we add the noise term $n_{i}[m]$ which is commonly assumed to be a normal distribution $\mathcal{N}(0,\sigma_{n}^{2})$. It should be noted that the added noise term is not the diffusion noise and in our model the random variables $\mathcal{S}_{ij}$ include the randomness of the diffusion.

To simplify the modeling, we denote $\mathcal{S}_{ij}[k]$ as follows:
\begin{equation}
\label{eq_binom_rv_s_k}
\mathcal{S}_{ij}[k] \triangleq \mathcal{S}_{ij}(kt_{s},(k+1)t_{s})
\end{equation}
to indicate the probability of a molecule emitted from a transmitter being captured by the receiver during the $k^{th}$ time slots after the emission. In the representation, $k\!=\!0$ indicates the case where molecules came from the current symbol, while adding all the cases of $k\!>\!0$ indicates the case of molecules from past symbols. $i\!=\!j$ indicates the case of molecules from the corresponding antenna, while $i\neq{j}$ indicates the molecules from the other link. In our configuration of a $2\!\times\!2$ MIMO system, $i$ and $j$ can be either $1$ or $2$. As a result, $y_{\mathrm{Rx}_i}[m]$ yields \eqref{eq_y_Rxi_n} provided that $i\neq{j}$ holds. \txblue{We present \eqref{eq_y_Rxi_n} at the top of the next page.}
\begin{figure*}
	\begin{align}
	\label{eq_y_Rxi_n}
	y_{\mathrm{Rx}_i}[m]=\underbrace{\mathcal{S}_{ii}[0]\cdot{x}_{i}[m]}_\text{desired signal}+\underbrace{\mathcal{S}_{ij}[0]\cdot{x}_{j}[m]}_\text{ILI}+\underbrace{\sum\limits_{k=1}^{m-1}\left(\underbrace{\mathcal{S}_{ii}[k]\cdot{x}_{i}[m-k]}_\text{ISI from Tx$_i$}+\underbrace{\mathcal{S}_{ij}[k]\cdot{x}_{j}[m-k]}_\text{ISI from Tx$_j$}\right)}_\text{total ISI}+\underbrace{n_{i}[m]}_\text{noise}.
	\end{align}
	\hrulefill
\end{figure*}

We write \eqref{eq_y_Rxi_n} in matrix form for each symbol slot in two different forms by excluding and including the ILI term in the channel matrix. \txblue{The channel model excluding ILI for a} $2\times2$ MIMO system is given as:
\begin{align}
\label{eq_y_excludingI_inMatrixForm}
\begin{split}
\begin{bmatrix} y_{\mathrm{Rx}_1} \\ y_{\mathrm{Rx}_2}\end{bmatrix} \!&=\! \begin{bmatrix} \mathcal{S}_{11}[0] & 0\\ 0& \mathcal{S}_{22}[0]\end{bmatrix} \begin{bmatrix} x_1 \\ x_2\end{bmatrix} \!+\! \begin{bmatrix} {\zeta}_{1}+{I}_{1} \\ {\zeta}_{2}+{I}_{2}\end{bmatrix}
\end{split}
\end{align}
where $\zeta_i$ and $I_i$ denote the `ILI' term and sum of the `total ISI' term and the `noise' term respectively from the formulation~\eqref{eq_y_Rxi_n}.
\txblue{Note that the model \eqref{eq_y_excludingI_inMatrixForm} still accounts for the ILI term in the interference vector. The name 'excluding' comes from the fact that the ILI term is excluded from the channel matrix.}
We write \eqref{eq_y_excludingI_inMatrixForm} in short as
\begin{equation}
\pmb{y}=\pmb{H}_{\mathrm{ex}}\pmb{x}+\pmb{I}_{\mathrm{ex}}.
\end{equation}
The channel model including the ILI in $\pmb{H}_{\mathrm{in}}$ is given as:
\begin{align}
\label{eq_y_includingI_inMatrixForm}
\begin{split}
\begin{bmatrix} y_{\mathrm{Rx}_1} \\ y_{\mathrm{Rx}_2}\end{bmatrix} \!=\! \begin{bmatrix} \mathcal{S}_{11}[0] & \mathcal{S}_{12}[0]\\ \mathcal{S}_{21}[0]& \mathcal{S}_{22}[0]\end{bmatrix} \begin{bmatrix} x_1 \\ x_2\end{bmatrix} \!+\! \begin{bmatrix} {I}_{1} \\{I}_{2}\end{bmatrix}
\end{split}
\end{align}
and it is written in short as
\begin{equation}
\pmb{y}=\pmb{H}_{\mathrm{in}}\pmb{x}+\pmb{I}.
\end{equation}
The details of $\pmb{I}$ and $\pmb{I}_{\mathrm{ex}}$ are presented in Section~\ref{Sec:threshold}.

\section{Channel Modeling and Proposed Detection Algorithms}
\label{fit_n_algorithm}
 \txblue{In this section, we introduce our work to find} a formula for the first-hitting probability in the molecular MIMO setup to formulate ${F}_{ij}(t)$. Extensive simulations were carried out to understand the underlying \txblue{model} and we use nonlinear curve fitting on the simulation data. After obtaining the fitted cumulative distribution functions (CDFs) we \txblue{carry} out our analytical derivations by utilizing approximated functions ${F}_{ij}(t)$. \txblue{We then present} the proposed signal detection algorithms.

\subsection{Channel Modeling}
\label{fitting}

We use a model function that is similar to~\eqref{eqn_first_hitting_3d} for fitting the simulation data (i.e., the formula is similar with molecular SISO in a 3-D environment with some control coefficients). \txblue{A similar model function is used in \cite{farsad2014channelAN} for modeling noise effects in the SISO testbed and in \cite{koo2015detectionAF} for modeling molecular MIMO channel via simulations. Control coefficients are selected to comprehend the effect on amplitude that stems from diffusion coefficient and time.} The model function structure for nonlinear fitting is as follows:
% % % % % % % % % % % % % % % % % % % % % % % % %
\begin{equation}
	F_{ij}^{\text{model}}(t|\rrn,\, d_{ij},\, D) =  \frac{b_1\,\rrn}{d_{ij}\!+\!\rrn}  \erfc\left( \frac{d_{ij}}{{(4D)^{b_2} \,t^{b_3}}} \right)
	\label{eqn_fpp_model_function}
\end{equation}
where $b_1$, $b_2$, and $b_3$ are fitting variables and $d_{ij}$ is the distance between Tx$_j$ and Rx$_i$. We run extensive simulations for 32 different parameter sets and estimate the mean CDF of the hitting molecules. In each simulation, 5000 molecules per antenna are emitted and we carry out 500 simulations. In the simulations, the hitting molecules are separated according to where they originated for finding ${F}_{11}$, ${F}_{12}$, ${F}_{21}$, and ${F}_{22}$. Due to the symmetry of the topology, ${F}_{11}$ and ${F}_{22}$ and similarly ${F}_{12}$ and ${F}_{21}$ are very close to one another (so we assume ${F}_{11} = {F}_{22}$ and ${F}_{12} = {F}_{21}$ due to symmetry). Also, note that the distance between Tx$_1$ and Rx$_2$ is different from the distance between Tx$_1$ and Rx$_1$. Therefore, the model function utilizes the corresponding distances for $F_{ij}$ as follows
% % % % % % % % % %
\begin{align}
	\begin{split}
		d_{11} = d_{22} &= d,\\
		d_{12} = d_{21} &= \sqrt{(d+\rrn)^2 + (2\rrn+h)^2} \,- \rrn.
	\end{split}
\end{align}

\begin{figure}
	\centering
	\includegraphics[width=0.99\columnwidth,keepaspectratio]
	{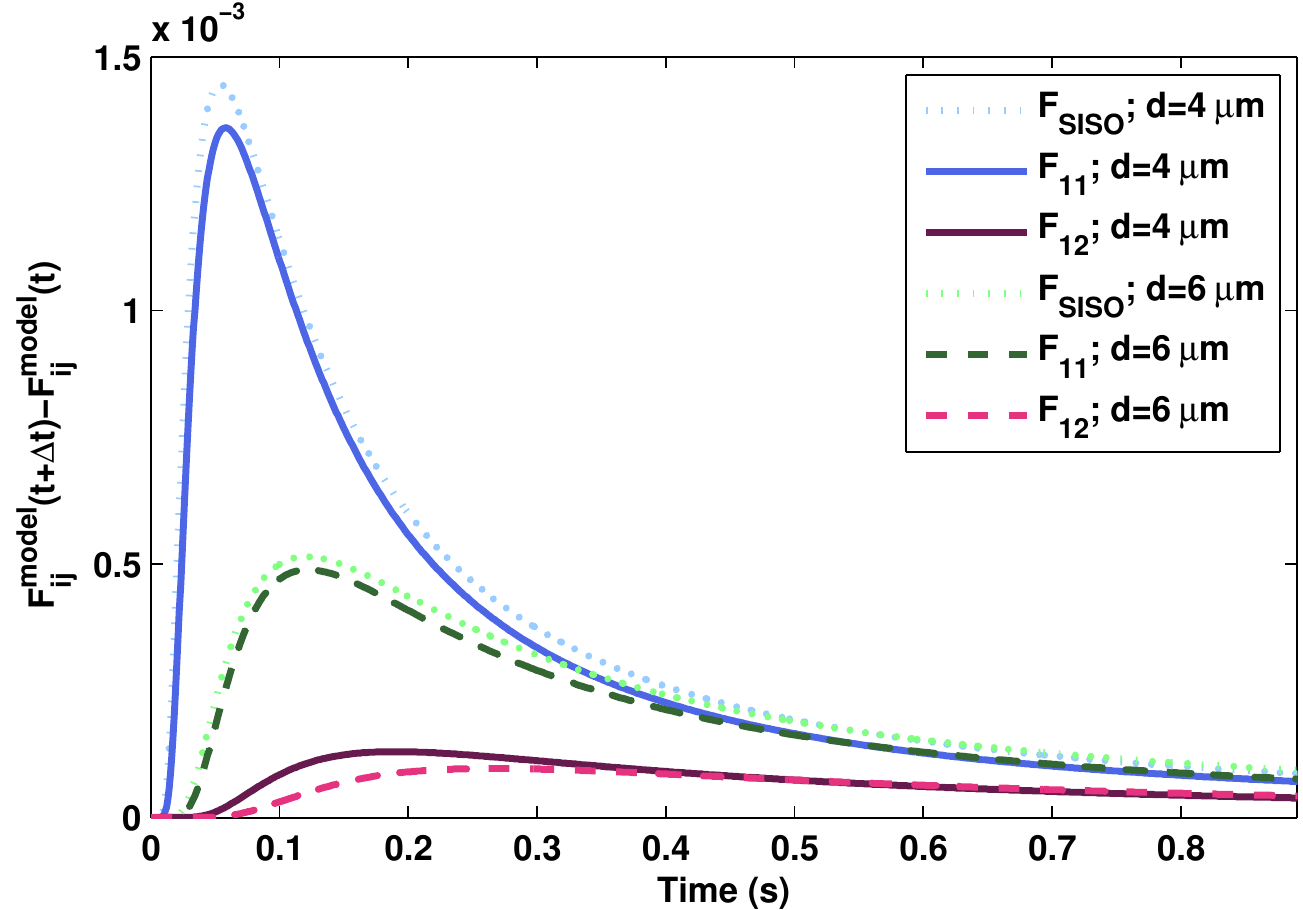}
	\caption{Fraction of hitting molecules for SISO cases and fitted fraction of hitting molecules for MIMO cases with an example parameter set ($D=\SI{50}{\micro\meter^2/\second}$, ${h=\SI{1}{\micro\meter}}$, ${\rrn=\SI{4}{\micro\meter}}$, ${\Delta t=\SI{0.001}{\second}}$).}
	\label{Fig:ft}
\end{figure}

%%%%%%%%%%%%%%%%%%%%%%%%%%%%%%%%%%%%%%%%%%%%%%%%%%%%%%%%%%%%%%%%%%%%%%%%%%%%%%%%
\begin{figure*}[t]
	\begin{center}
		\subfigure[Fitted values of $b_1$ for $F_{11}$]
		{\includegraphics[width=0.32\textwidth,keepaspectratio]
			{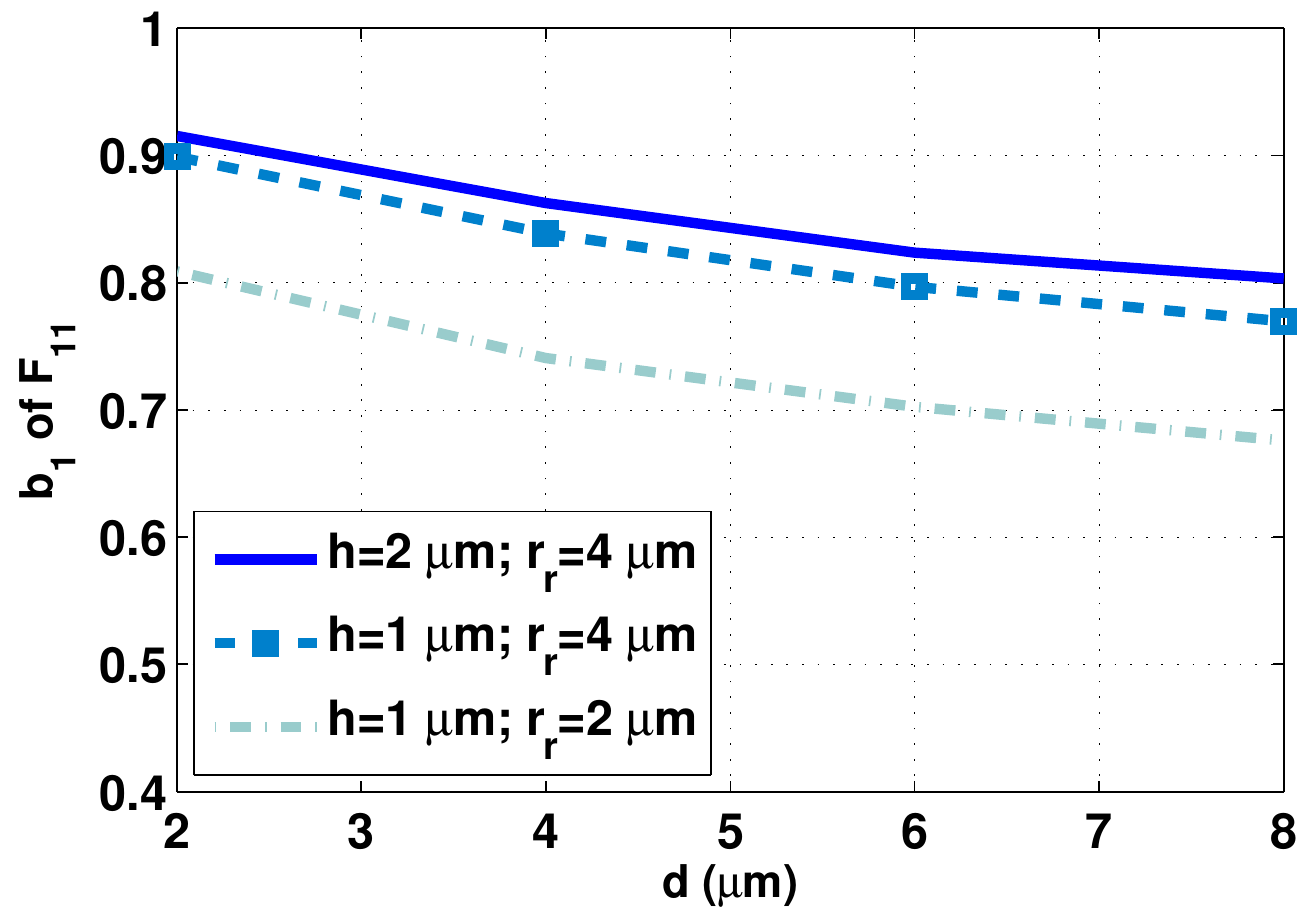}
			\label{subfig_fitting_b1} } 
		\subfigure[Fitted values of $b_2$ for $F_{11}$]
		{\includegraphics[width=0.32\textwidth,keepaspectratio]
			{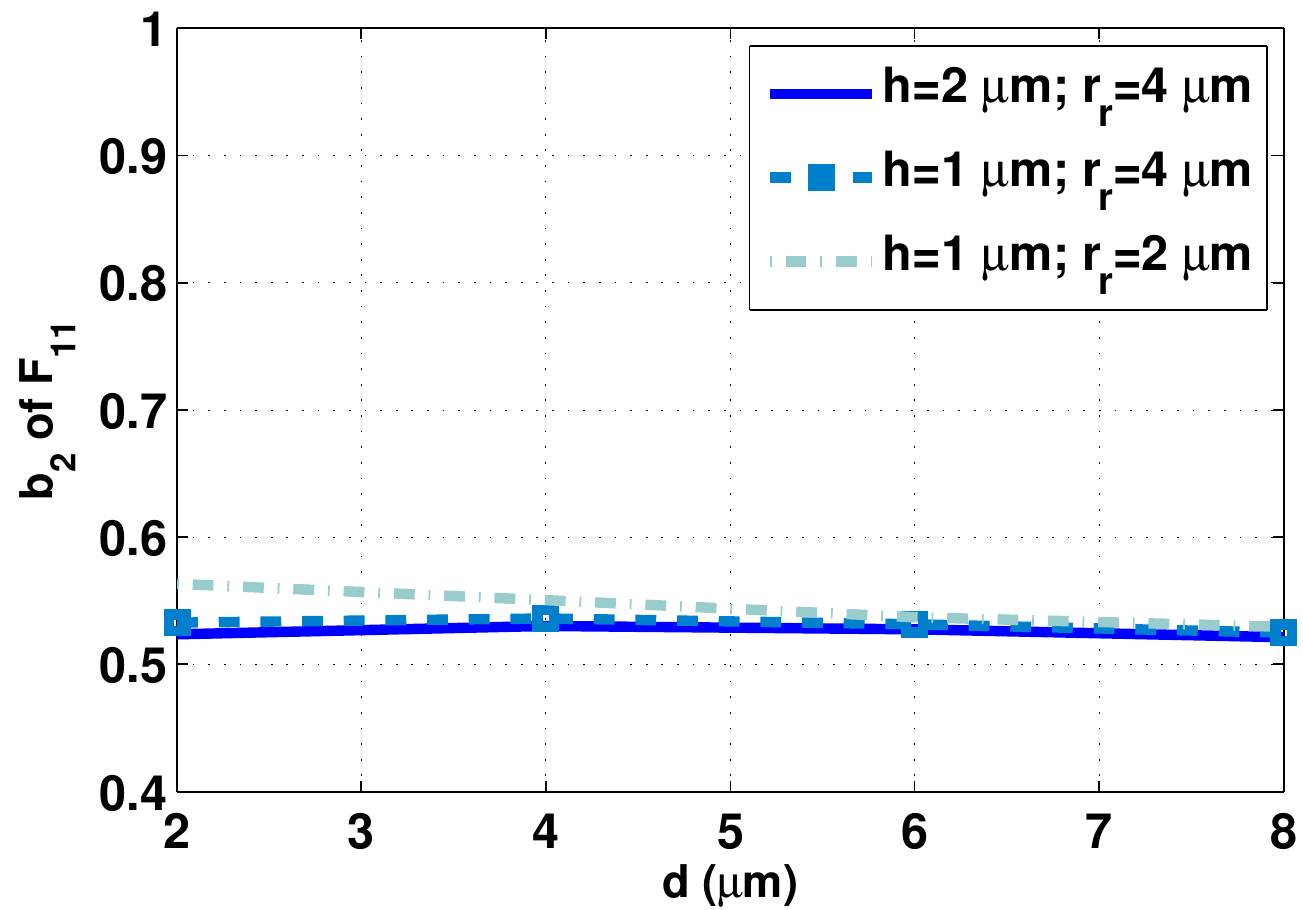}
			\label{subfig_fitting_b2} }
		\subfigure[Fitted values of $b_3$ for $F_{11}$]
		{\includegraphics[width=0.32\textwidth,keepaspectratio]
			{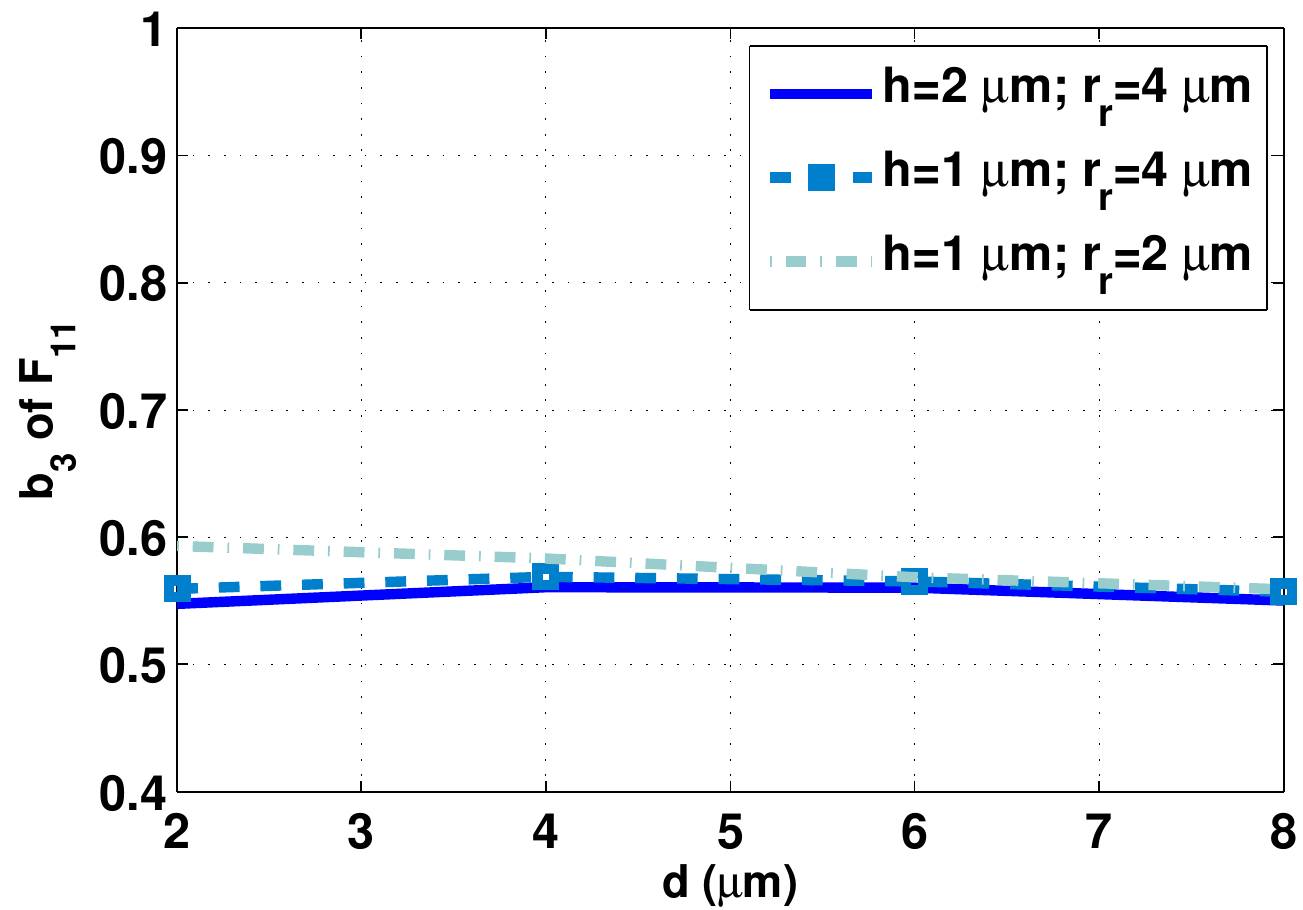}
			\label{subfig_fitting_b3} }
		\subfigure[Fitted values of $b_1$ for $F_{12}$]
		{\includegraphics[width=0.32\textwidth,keepaspectratio]
			{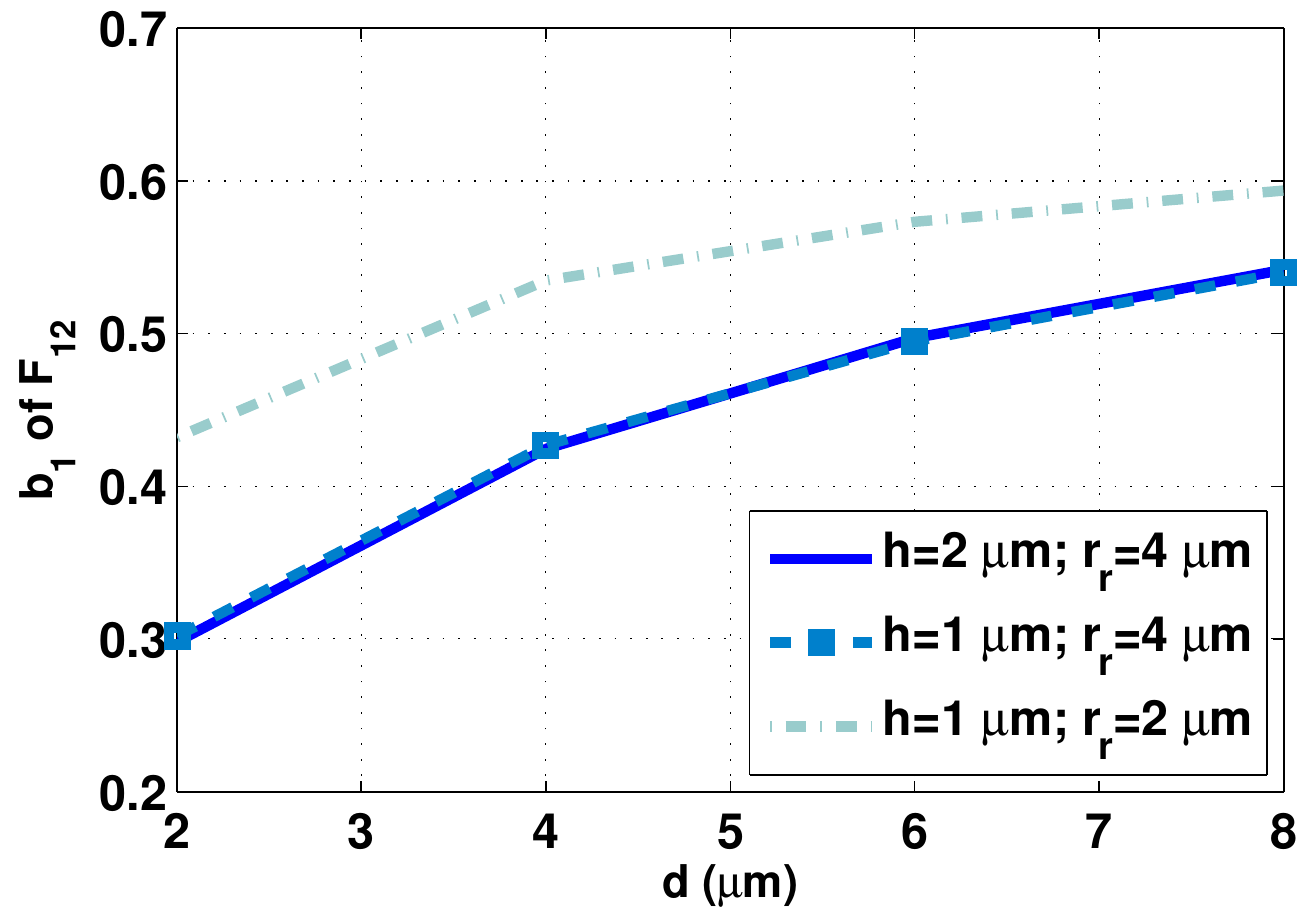}
			\label{subfig_fitting_cross_b1} } 
		\subfigure[Fitted values of $b_2$ for $F_{12}$]
		{\includegraphics[width=0.32\textwidth,keepaspectratio]
			{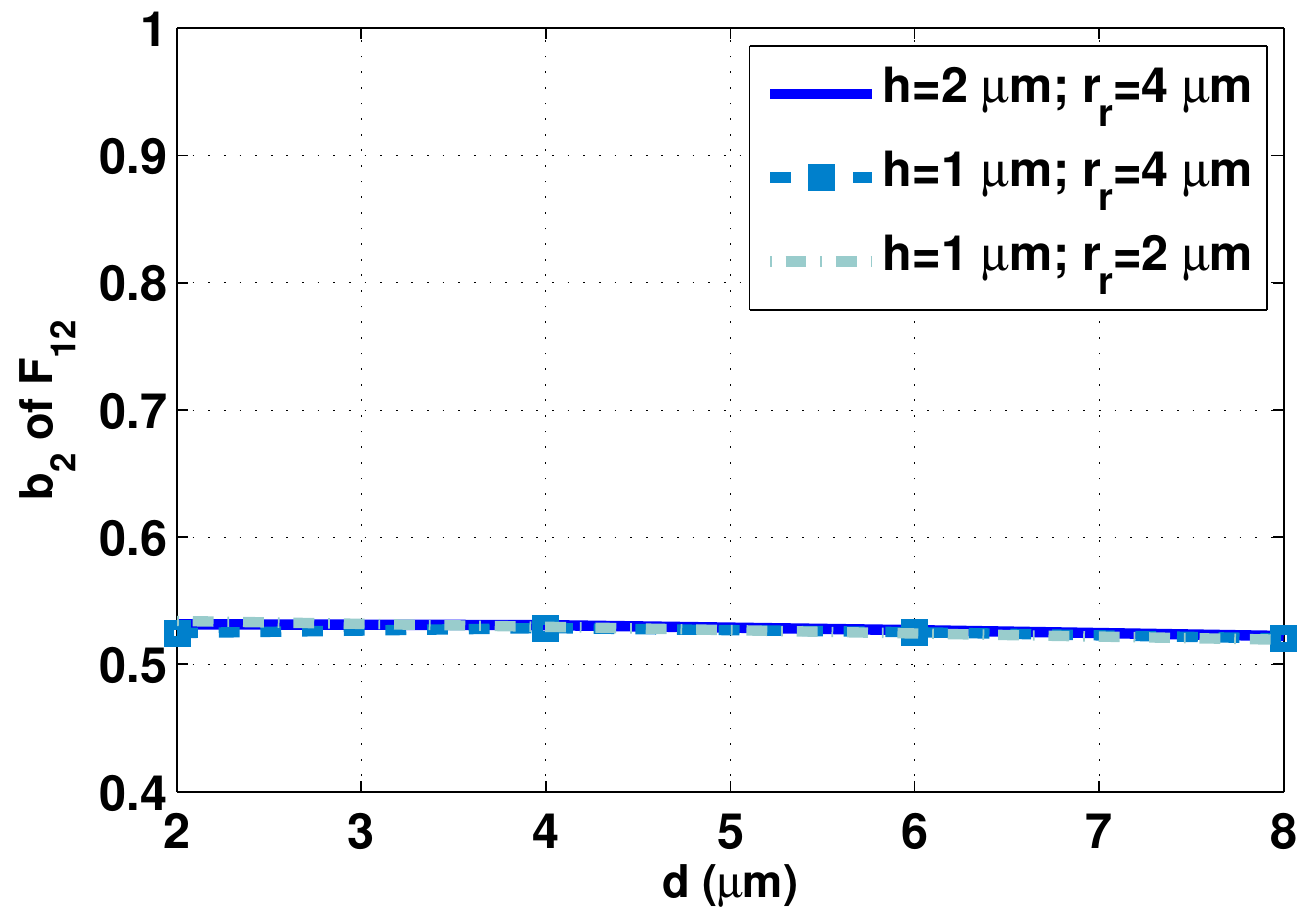}
			\label{subfig_fitting_cross_b2} }
		\subfigure[Fitted values of $b_3$ for $F_{12}$]
		{\includegraphics[width=0.32\textwidth,keepaspectratio]
			{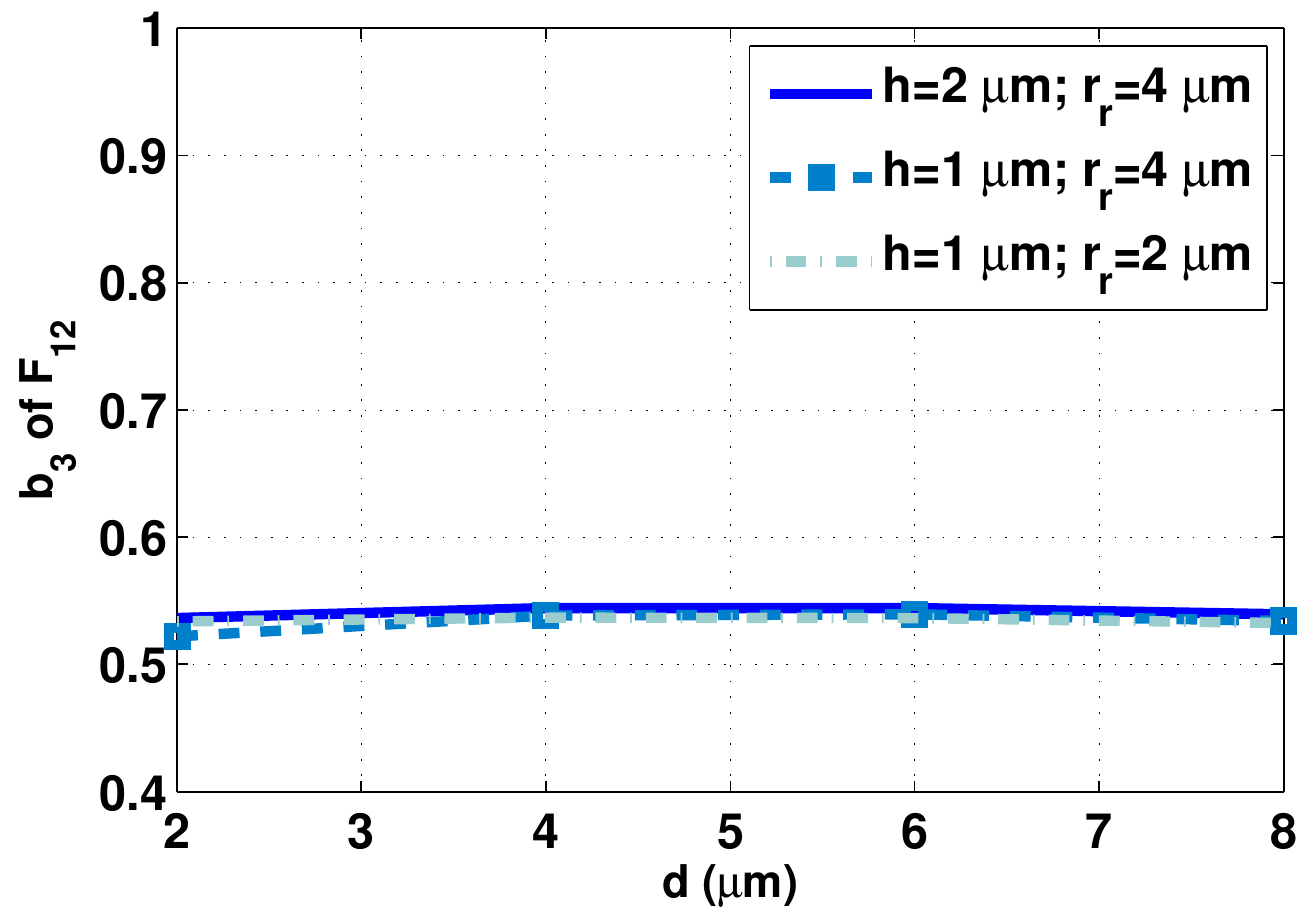}
			\label{subfig_fitting_cross_b3} }
	\end{center}
	\caption{Fitted model parameters of ${F}_{11}(t)$ and ${F}_{12}(t)$ for different $h$ and $\rrn$ values (\txblue{$D=\SI{50}{\micro\meter^2/\second}$}).}
	\label{fig_fitting_all}
\end{figure*}
%%%%%%%%%%%%%%%%%%%%%%%%%%%%%%%%%%%%%%%%%%%%%%%%%%%%%%%%%%%%%%%%%%%%%%%%%%%%%%%%

The estimated CDFs are obtained from the simulation outputs by considering the histogram of the arriving molecules up to a sufficient time. We adopt 1.5 seconds which is enough to observe the peak of the signal under the given parameters. Starting from zero to the end of simulation time, we obtain a fraction of arriving molecules. Note that this function is monotonically increasing with time as expected from \eqref{eqn_fpp_model_function}. To fit nonlinear regression models to simulation data, this study uses the NonLinearModel class in MATLAB. We implement the model function and utilize the simulation outputs to have a closed form CDF estimation that obeys \eqref{eqn_fpp_model_function}.

Fig.~\ref{Fig:ft} shows the resulting signal function that defines the received signal at the intended receiver as $F_{11}$ and the ILI signal as $F_{12}$ with the given parameters. Increasing the distance results in shifting the peak amplitude time and reducing the peak amplitude value. The ILI term is non-negligible, but the distance has a more severe effect on the received signal.

Fig.~\ref{fig_fitting_all} presents distance-versus-fitted model variables for different $h$ and $\rrn$ values. The first and the second rows correspond to $F_{11}$ and $F_{12}$ modeling, respectively. The values of $b_2$ and $b_3$ change little when the distance is increased and, similar to the SISO case, they are close to 0.55 (in~\cite{yilmaz2014_3dChannelCF}, for the SISO case $b_2\!=\!b_3\!=\!0.5$). We also observe that $\rrn$ is more dominant than $h$.

\begin{figure}
	\centering
	\includegraphics[width=0.99\columnwidth,keepaspectratio]
	{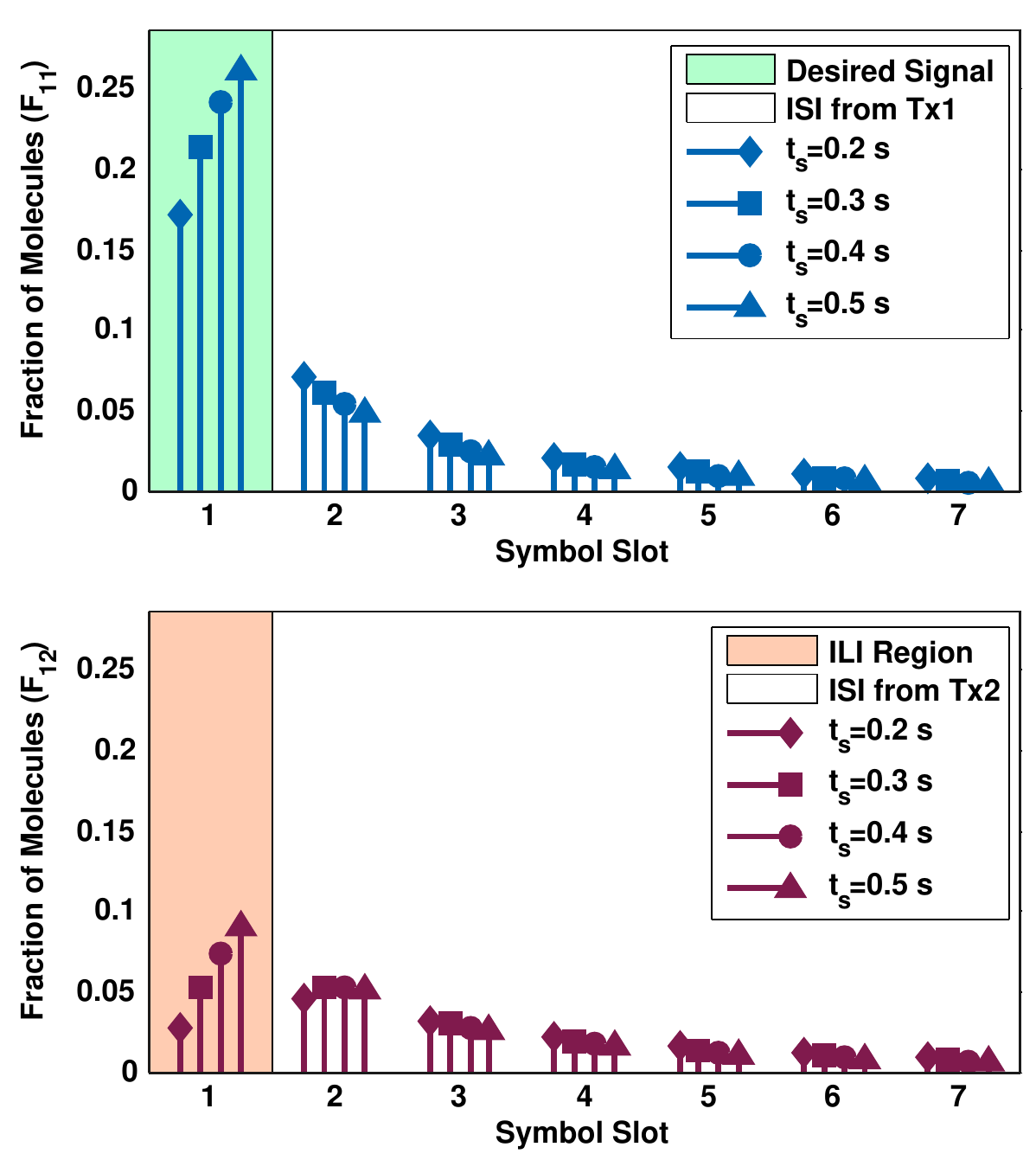}
	\caption{Fraction of hitting molecules for different symbol durations and the cases Tx$_1$/Tx$_2$ to Rx$_1$ via utilizing fitted functions. Top plot contains regions for the desired signal and ISI from Tx$_1$. Bottom plot contains regions for ILI and ISI from Tx$_2$. (\txblue{$D=\SI{50}{\micro\meter^2/\second}$}, \txblue{${h=\SI{1}{\micro\meter}}$}, \txblue{${\rrn=\SI{4}{\micro\meter}}$}).}
	\label{Fig:pks}
\end{figure}
After channel modeling, we can utilize the fitted functions for simulations that consider consecutive transmissions. To do so, we first need to evaluate the fraction of received molecules for consecutive symbol slots. Depending on the symbol duration, we can evaluate the channel response for each symbol slot. Fig.~\ref{Fig:pks} depicts the fraction of received molecules for each consecutive symbol slot. Increasing $t_s$ increases the amplitude of the desired signal and reduces the ISI at the cost of increasing the ILI. Therefore, while determining thresholds, the symbol duration should also be considered. If the detection thresholds are not selected appropriately, then ILI may cause bit-0 to be detected as bit-1 \txblue{or vice versa}, which also means `error'.

\subsection{Detection Algorithms}
\label{Sec:Detection_algorithms}
In this section, we introduce five detection algorithms. Each of them requires a different set of particular pieces of information given to the receiver. Fig.~\ref{Fig:Algorithms} describes the algorithms in terms of the information required. There is a default set that is needed commonly for all the algorithms. The default set consists of system parameters $D$, $t_s$, and topology parameters such as $d$, $h$, and $r_r$. The first algorithm, which we name the \textit{fixed threshold method} works with only the default set. It uses an empirical threshold and does not adapt to varying symbol duration or the signal power. The other four algorithms exploit the statistical analysis of a channel in contrast to the first one. Two of them incorporate \eqref{eq_y_excludingI_inMatrixForm} and the other two utilize~\eqref{eq_y_includingI_inMatrixForm}. 
\begin{figure}
	\centering
	\includegraphics[width=0.99\columnwidth,keepaspectratio]
	{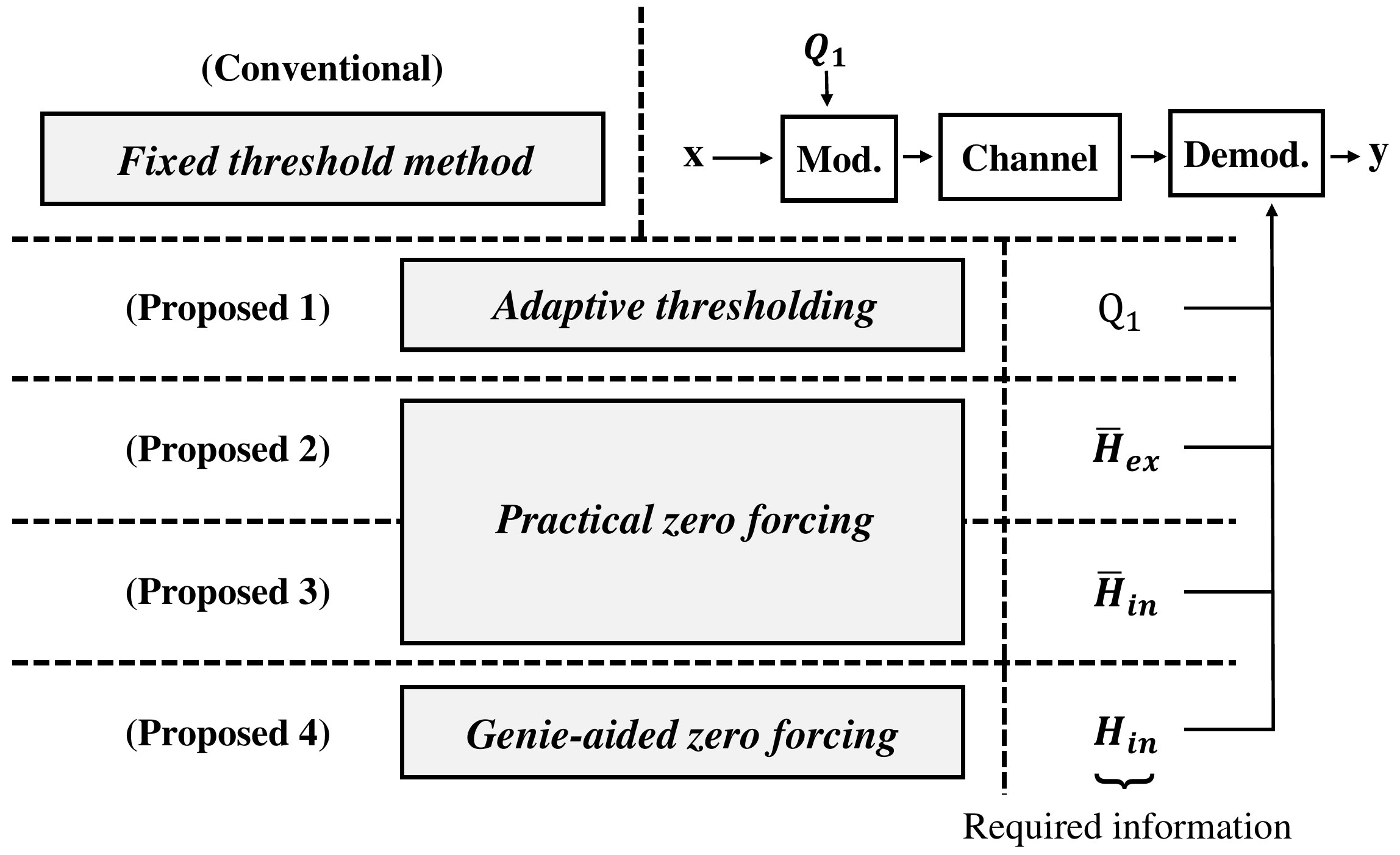}
	\caption{Representation of detection algorithms and their requirements.}
	\label{Fig:Algorithms}
\end{figure}

The second algorithm, which uses the channel model~\eqref{eq_y_excludingI_inMatrixForm}, is called \textit{adaptive thresholding}. It additionally requires $Q_1$, which is not a big assumption since $Q_1$ is determined with the communication protocol and the modulation. The output of the detector, $\hat{\pmb{y}}_{a}$, is formulated in ({\ref{eqn_adaptive}}) and the algorithm calculates the optimal decision threshold accordingly. \txblue{Details of the threshold decision will be presented in Section~\ref{Sec:decision_thresholds}.}
\begin{equation}
\label{eqn_adaptive}
\hat{\pmb{y}}_{a}=\frac{1}{Q_1}\pmb{y}=\frac{1}{Q_1}({\pmb{H}_{\mathrm{ex}}\pmb{x}}+{\pmb{I}_{\mathrm{ex}}}).
\end{equation}
The third algorithm, called \textit{practical zero forcing (ZF) with $\pmb{H}_{\mathrm{ex}}$}, needs the average channel response matrix, which is denoted by $\bar{\pmb{H}}_{\mathrm{ex}}$. The formulation, inspired by the conventional zero forcing communication strategy, is given in ({\ref{eqn_practical}}) and was first proposed in~\cite{koo2015detectionAF}. The output of the detector, $\hat{\pmb{y}}_\text{ex}$, is formulated as 
\begin{align}
\label{eqn_practical}
\hat{\pmb{y}}_\text{ex}&={\bar{\pmb{H}}_{\mathrm{ex}}}^{-1}\pmb{y}={\bar{\pmb{H}}_{\mathrm{ex}}}^{-1}{\pmb{H}_{\mathrm{ex}}\pmb{x}}+{\bar{\pmb{H}}_{\mathrm{ex}}}^{-1}{\pmb{I}_{\mathrm{ex}}}.
\end{align}
It is far from the shape of the conventional zero forcing in RF communication. %\txrev{The receiver can ignore the antenna separation for the detection by this model, since $h$ is incorporated in the parameter $\pmb{I}_{\mathrm{ex}}$.}

To check the utility of diagonal terms in the channel matrix, we utilize the full rank channel model as in~\eqref{eq_y_includingI_inMatrixForm}. Determining how the exact channel components can be achieved is still an open task; we take the average channel response matrix  denoted by $\bar{\pmb{H}}_{\mathrm{in}}$. The new strategy is named \textit{practical ZF} with { $\pmb{H}_{\mathrm{in}}$} and the formulation is given by
\begin{align}
\label{eqn_more_practical}
\hat{\pmb{y}}_\mathrm{in}&={\bar{\pmb{H}}_{\mathrm{in}}}^{-1}\pmb{y}={\bar{\pmb{H}}_{\mathrm{in}}}^{-1}{\pmb{H}_{\mathrm{in}}\pmb{x}}+{\bar{\pmb{H}}_{\mathrm{in}}}^{-1}{\pmb{I}}.
\end{align}
Lastly, we assume the case that the receiver somehow knows the exact channel state for every signal reception time. Since this is not feasible in practic for this case, we name the algorithm \textit{Genie-aided zero forcing}. The output of the detector, $\hat{\pmb{y}}_\mathrm{g}$, is formulated in ({\ref{eqn_genie}}). Though it provides the best performance, it is not feasible. Indeed, the randomness of the molecular communication channel originates with the Brownian motion of molecules, something that is hard to acquire instantaneously.\footnote{Note that the coherence time of molecular communications is zero unlike RF communications.}
\begin{align}
\label{eqn_genie}
\hat{\pmb{y}}_\mathrm{g}&={\pmb{H}}_{\mathrm{in}}^{-1}\pmb{y}=\pmb{x}+{\pmb{H}_{\mathrm{in}}}^{-1}{\pmb{I}}.
\end{align}

Our first contribution lies in the proof that the \textit{adaptive thresholding} and the \textit{practical ZF} with { $\pmb{H}_{\mathrm{ex}}$} methods perform exactly the same for the symmetrical MIMO topology. 

\begin{theorem}
	\label{theorem_adap&prac}
	When the centers of the transmitter and the receive antennas form a rectangular grid, the detector outputs of the \textit{adaptive thresholding} and the \textit{practical ZF} with { $\pmb{H}_{\mathrm{ex}}$} methods satisfy $\,\hat{\pmb{y}}_\text{a}=A_{0}\,\hat{\pmb{y}}_\mathrm{ex}$, where $A_0$ denotes the probability of a molecule hitting the intended antenna at the current symbol duration.
\end{theorem}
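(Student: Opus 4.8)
The plan is to use the rectangular-grid symmetry to show that the averaged channel matrix $\bar{\pmb{H}}_{\mathrm{ex}}$ reduces to a scalar multiple of the identity, after which both detectors act on the received vector $\pmb{y}$ by mere scalar multiplication and the two scalars differ by exactly $A_0$. First I would pin down the constant: by definition $A_0$ is the probability that a molecule emitted by a transmit antenna reaches its paired receive antenna within the current symbol slot, i.e. $A_0 = F_{11}(0,t_s) = F_{22}(0,t_s)$, where the equality of the two values is precisely the consequence of the symmetric topology $d_{11}=d_{22}=d$ established in Section~\ref{fitting}.

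Next I would evaluate $\bar{\pmb{H}}_{\mathrm{ex}}$ explicitly. In \eqref{eq_y_excludingI_inMatrixForm} the matrix $\pmb{H}_{\mathrm{ex}}$ is diagonal with entries $\mathcal{S}_{11}[0]$ and $\mathcal{S}_{22}[0]$, and by \eqref{eq_binom_rv_sij}--\eqref{eq_binom_rv_s_k} each of these is a binomial variable $\mathcal{B}(Q_1, F_{ii}(0,t_s))$ with mean $Q_1 F_{ii}(0,t_s)$. Taking the average channel response therefore yields $\bar{\pmb{H}}_{\mathrm{ex}} = \mathrm{diag}(Q_1 A_0,\, Q_1 A_0)$. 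Since both diagonal entries coincide, its inverse is $\bar{\pmb{H}}_{\mathrm{ex}}^{-1} = \mathrm{diag}\big((Q_1 A_0)^{-1},\,(Q_1 A_0)^{-1}\big)$, so that $\bar{\pmb{H}}_{\mathrm{ex}}^{-1}\pmb{y} = (Q_1 A_0)^{-1}\pmb{y}$.

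Finally I would substitute back into the detector definitions. The practical-ZF output \eqref{eqn_practical} becomes $\hat{\pmb{y}}_\mathrm{ex} = \bar{\pmb{H}}_{\mathrm{ex}}^{-1}\pmb{y} = (Q_1 A_0)^{-1}\pmb{y}$, while the adaptive-thresholding output \eqref{eqn_adaptive} is $\hat{\pmb{y}}_\text{a} = Q_1^{-1}\pmb{y}$. Comparing the two gives $\hat{\pmb{y}}_\text{a} = A_0\,(Q_1 A_0)^{-1}\pmb{y} = A_0\,\hat{\pmb{y}}_\mathrm{ex}$, which is the claimed identity.

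Because the result is a one-line linear-algebra identity once $\bar{\pmb{H}}_{\mathrm{ex}}$ is known, the calculation presents no real difficulty; the single delicate point is the step forcing the two diagonal entries of $\bar{\pmb{H}}_{\mathrm{ex}}$ to be equal, which is exactly where the rectangular-grid hypothesis enters. I would therefore state that assumption carefully and note that without it $F_{11}\neq F_{22}$, so $\bar{\pmb{H}}_{\mathrm{ex}}$ would no longer be a scalar matrix and $\bar{\pmb{H}}_{\mathrm{ex}}^{-1}$ would rescale the two components of $\pmb{y}$ unequally; no single global factor $A_0$ could then relate the two detector outputs. This also explains why the equivalence is specific to the symmetric $2\times2$ layout and need not carry over to a general $M\times N$ topology.
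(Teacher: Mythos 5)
Your proposal is correct and follows essentially the same route as the paper's own proof: topological symmetry makes the diagonal entries of $\pmb{H}_{\mathrm{ex}}$ identically distributed binomials with mean $Q_1 A_0$, so $\bar{\pmb{H}}_{\mathrm{ex}} = Q_1 A_0 \pmb{E}$, and comparing $\hat{\pmb{y}}_{\mathrm{ex}} = (Q_1 A_0)^{-1}\pmb{y}$ with $\hat{\pmb{y}}_{a} = Q_1^{-1}\pmb{y}$ gives the claimed identity. The only cosmetic difference is that the paper also invokes a normal approximation of the binomial entries, which, as your argument shows, is unnecessary for computing the mean.
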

\begin{proof}
	See the Appendix~\ref{sec_app_1}.
\end{proof}
\emph{Theorem}~\ref{theorem_adap&prac} ensures that both methods perform the same on average, since the signal detection properties (i.e., the detector outputs) are similar up to a constant multiplication.

\section{Theoretical Analysis}
\label{Sec:Theory}
\subsection{Interference Formulations}
\label{Sec:threshold}
In this section, we formulate the interference and find, analytically, the optimal  decision rule for the \textit{adaptive thresholding}, \textit{practical ZF} with {$\pmb{H}_{\mathrm{ex}}$}, and \textit{practical ZF} with {$\pmb{H}_{\mathrm{in}}$} methods. To derive the optimal decision threshold, the study uses the maximum-a-posterior (MAP) method.

As the receiver is unaware of parameter $Q_1$ in the case of the \textit{fixed threshold method}, we have to pre-determine a range of $Q_1$ to use and decide the decision threshold, $\eta_{f}$, to minimize the bit error rate (BER) for all $Q_1$. Note that the analysis of \textit{Genie-aided zero forcing} inherits the difficulties of acquiring instantaneous $\pmb{H}_{\mathrm{in}}$. Hence, the optimal thresholds for \textit{Genie-aided zero forcing} must be found empirically.

We should consider the interference when determining the thresholds and consequently the received symbol. The topological symmetry ensures that all the properties of Rx$_1$ and Rx$_2$ coincide in terms of interference, so it is sufficient to analyze Rx$_1$ only. The channel output for Rx$_1$ at the $n^\mathrm{th}$ time slot becomes $y_{\rxi{1}}[n]=\mathcal{S}_{11}[0]x_{1}[n]+{D}_{1}[n]+{I}_{1}[n]$ from ~\eqref{eq_y_Rxi_n}.
%${I}_{\rxi{1}}[n]$ can be rewritten as 
%\begin{equation}
%{I}_{\rxi{1}}[n]=\sum\limits_{k=1}^{n-1}Q_{x_{1}[n-k]}\mathcal{S}_{11}[k+1]+\sum\limits_{k=0}^{n-1}Q_{x_{2}[n-k]}\mathcal{S}_{12}[k+1] \nonumber
%\end{equation}
%from \eqref{interferences}.
We assume ${I}_{1}[n]$ has a mean $\mu_{I}$ and a variance $\sigma_{I}^{2}$. \txblue{We define $A_k$ as the mean value of $\mathcal{S}_{11}[k]$ and $B_k$ as the mean value of $\mathcal{S}_{12}[k]$.} \emph{Lemma}~\ref{lemma_interference} provides the formulations for estimating the mean and the variance of the interference. 

\begin{lemma}
\label{lemma_interference}
The term $\mathcal{S}_{11}[k]{x_{1}[m-k]}$ of~\eqref{eq_y_Rxi_n} has mean value of $\pi_{1}Q_{1}A_{k}$ and variance of ${\pi_{1}Q_{1}A_{k}(1-A_{k})+\pi_{0}\pi_{1}Q_{1}^{2}A_{k}^{2}}$\txblue{, where $\pi_{0}$ and $\pi_{1}$ are a priori probabilities of transmitted bit-0 and bit-1.} 
\end{lemma}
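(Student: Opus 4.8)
The plan is to read $\mathcal{S}_{11}[k]\,x_{1}[m-k]$ as a product of two \emph{independent} random variables and to exploit that the data bit is $\{0,1\}$-valued. By construction $\mathcal{S}_{11}[k]=\mathcal{B}(Q_{1},A_{k})$ is a binomial reception count over the $Q_{1}$ emitted molecules with per-molecule hitting probability $A_{k}$, so $\mathbb{E}[\mathcal{S}_{11}[k]]=Q_{1}A_{k}$ and $\mathrm{Var}(\mathcal{S}_{11}[k])=Q_{1}A_{k}(1-A_{k})$; its randomness comes solely from the Brownian trajectories. The bit $x_{1}[m-k]$ is Bernoulli with $\Pr\{x_{1}[m-k]=1\}=\pi_{1}$ and $\Pr\{x_{1}[m-k]=0\}=\pi_{0}=1-\pi_{1}$. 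Since the diffusion paths carry no information about which symbol was sent, the two variables are independent, and this is the property that makes all joint moments factor.

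First I would compute the mean. Writing $S=\mathcal{S}_{11}[k]$ and $X=x_{1}[m-k]$, independence gives
\begin{equation}
\mathbb{E}[SX]=\mathbb{E}[S]\,\mathbb{E}[X]=(Q_{1}A_{k})\,\pi_{1}=\pi_{1}Q_{1}A_{k},
\end{equation}
which is the asserted mean.

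For the variance I would condition on the bit and apply the law of total variance; this is also the cleanest way to see where each term originates. Because $X$ is $0/1$-valued it is idempotent, $X^{2}=X$, so $\mathbb{E}[SX\mid X]=X\,Q_{1}A_{k}$ and $\mathrm{Var}(SX\mid X)=X^{2}\,\mathrm{Var}(S)=X\,Q_{1}A_{k}(1-A_{k})$. Therefore
\begin{align}
\mathrm{Var}(SX)
&=\mathbb{E}\!\left[\mathrm{Var}(SX\mid X)\right]+\mathrm{Var}\!\left(\mathbb{E}[SX\mid X]\right) \nonumber\\
&=\pi_{1}Q_{1}A_{k}(1-A_{k})+Q_{1}^{2}A_{k}^{2}\,\mathrm{Var}(X),
\end{align}
and substituting $\mathrm{Var}(X)=\pi_{1}(1-\pi_{1})=\pi_{0}\pi_{1}$ reproduces the claimed variance $\pi_{1}Q_{1}A_{k}(1-A_{k})+\pi_{0}\pi_{1}Q_{1}^{2}A_{k}^{2}$.

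The step I would flag as the main (if modest) obstacle is that the variance, unlike the mean, does not factor across the two independent variables: the product mixes both of their variances. Conceptually, the first term is the diffusion/binomial variability that is switched on only in the fraction $\pi_{1}$ of slots that actually transmit a $1$, whereas the second term, $\pi_{0}\pi_{1}Q_{1}^{2}A_{k}^{2}$, is generated entirely by the uncertainty in the data bit through $\mathrm{Var}(X)=\pi_{0}\pi_{1}$. The only algebraic care needed is the idempotency $X^{2}=X$ and collecting the $Q_{1}^{2}A_{k}^{2}$ terms via $\pi_{1}-\pi_{1}^{2}=\pi_{0}\pi_{1}$; an equivalent one-line check is the direct expansion $\mathrm{Var}(SX)=\mathbb{E}[S^{2}]\,\mathbb{E}[X^{2}]-(\mathbb{E}[S]\,\mathbb{E}[X])^{2}$ using $\mathbb{E}[S^{2}]=Q_{1}A_{k}(1-A_{k})+Q_{1}^{2}A_{k}^{2}$ and $\mathbb{E}[X^{2}]=\pi_{1}$.
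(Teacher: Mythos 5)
Your proof is correct and follows essentially the same route as the paper: both condition on the transmitted bit, use $\mathbb{E}[\mathcal{S}_{11}[k]]=Q_{1}A_{k}$ and $\mathrm{Var}(\mathcal{S}_{11}[k])=Q_{1}A_{k}(1-A_{k})$, and reduce the variance to $\pi_{1}(Q_{1}^{2}A_{k}^{2}+Q_{1}A_{k}(1-A_{k}))-\pi_{1}^{2}Q_{1}^{2}A_{k}^{2}$ via $\pi_{1}-\pi_{1}^{2}=\pi_{0}\pi_{1}$. Your law-of-total-variance packaging is only a cosmetic reorganization of the paper's direct second-moment computation, which you yourself include as the ``one-line check'' at the end.
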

\begin{proof}
With probability $\pi_1$ and $\pi_0$, $\mathcal{S}_{11}[k]x_{1}[m-k]$ follows $\mathcal{N}(Q_{1}A_{k},Q_{1}A_{k}(1-A_{k}))$ and becomes just zero, respectively. Therefore, the mean of the received ISI becomes $\pi_{1}Q_{1}A_{k}$ and the variance becomes
\begin{eqnarray}
\nonumber
\sigma^{2}&=&E\left[\left(\mathcal{S}_{11}[k]x_{1}[m-k]\right)^{2}\right]-{E\left[\mathcal{S}_{11}[k]x_{1}[m-k]\right]}^{2}\\ \nonumber
&=&\pi_{1}(Q_{1}^{2}A_{k}^{2}+Q_{1}A_{k}(1-A_{k}))-\pi_{1}^{2}Q_{1}^{2}A_{k}^{2}\\ \nonumber
&=&\pi_{1}Q_{1}A_{k}(1-A_{k})+(\pi_{1}-\pi_{1}^{2})Q_{1}^{2}A_{k}^{2}\\ \nonumber
&=&\pi_{1}Q_{1}A_{k}(1-A_{k})+\pi_{0}\pi_{1}Q_{1}^{2}A_{k}^{2}.
\end{eqnarray}
\txblue{The second equation comes from the fact that $(Q_{1}^{2}A_{k}^{2}+Q_{1}A_{k}(1-A_{k}))$ is the summation of square of nonzero outputs.}
\end{proof}

Similarly, we can apply \emph{Lemma}~\ref{lemma_interference} to find the mean and variance of the term $\mathcal{S}_{12}[k]{x_{2}[m-k]}$ and sum both to find the total mean and variance of ${I}_{1}[m]$. By \emph{Lemma}~\ref{lemma_interference}, the total interference mean and variance at the $m^{th}$ symbol slot becomes
\begin{align}
\label{eqn_interference_all}
\begin{split}
\mu_{I}&=\pi_{1}Q_{1}\left(\sum\limits_{k=1}^{m-1}A_{k}+B_{k}\right),\\
\sigma_{I}^{2}&=\pi_{0}\pi_{1}Q_{1}^{2}\left(\sum\limits_{k=1}^{m-1}A_{k}^2+B_{k}^2\right)\\
&+\pi_{1}Q_{1}\left(\sum\limits_{k=1}^{m-1}A_{k}(1-A_{k})+B_{k}(1-B_{k})\right)+\sigma_{n}^{2}
\end{split}
\end{align}
where $B_{k}$ denotes the success probability of both $Q_{1}^{-1}\mathcal{S}_{12}[k]$ and $Q_{1}^{-1}\mathcal{S}_{21}[k]$. Applying \emph{Lemma}~\ref{lemma_interference} for the case of $k=0$ leads to the statistics of the desired signal $\mathcal{S}_{11}[0]x_{1}[m]$. Substituting $\mathcal{S}_{11}$ and $Q_{1}A_0$ with $\mathcal{S}_{12}$ and $Q_{1}B_0$ provides the statistics of $D_{1}[m]$--the ILI term. Note that \eqref{eqn_interference_all} does not require the previously sent bit sequences. It  requires only the index of the current symbol, as it evaluates the expected value over the cases. Hence, \eqref{eqn_interference_all} can be used for each symbol consecutively.

After formulating the interference and the detector output, we can now derive the thresholds for the \textit{practical ZF} with { $\pmb{H}_{\mathrm{ex}}$}. We denote the probability density function (PDF) of $\hat{\pmb{y}}_\mathrm{ex}$ when the transmitted bit is $0$ and $1$ as $\hat{\pmb{y}}_{\mathrm{ex}|0}$ and $\hat{\pmb{y}}_{\mathrm{ex}|1}$. Each of them becomes a vector of size $2\times{1}$. The $i^{\mathrm{th}}$ element of the vector represents the number of molecules received at Rx$_i$. From the symmetry of the system, it is enough to observe the statistics of $\hat{\pmb{y}}_{\mathrm{ex}|0}(1)$ and $\hat{\pmb{y}}_{\mathrm{ex}|1}(1)$. By utilizing \eqref{eqn_practical} and \eqref{eqn_b2n}, we can find $\hat{\pmb{y}}_\mathrm{ex}=(Q_{1}A_{0})^{-1}\pmb{y}$. Revisiting the matrix form \eqref{eq_y_excludingI_inMatrixForm} with the result provides the following:
\begin{align}
\label{eq_y_p_1}
\begin{split}
\hat{\pmb{y}}_{\mathrm{ex}}(1)&=(Q_{1}A_{0})^{-1}\cdot\left(\mathcal{S}_{11}[0]x_{1}+\mathcal{S}_{12}[0]x_{2}+I_{1}\right),\\
\hat{\pmb{y}}_{\mathrm{ex}|0}(1)&=(Q_{1}A_{0})^{-1}\mathcal{S}_{12}[0]x_{2}+(Q_{1}A_{0})^{-1}I_{1},\\
\hat{\pmb{y}}_{\mathrm{ex}|1}(1)&=(Q_{1}A_{0})^{-1}\mathcal{S}_{11}[0]+\hat{\pmb{y}}_{\mathrm{ex}|0}(1).
\end{split}
\end{align}

 We assume that $\hat{\pmb{y}}_{\mathrm{ex}|0}(1)$ and $\hat{\pmb{y}}_{\mathrm{ex}|1}(1)$ follow Gaussian distribution $\mathcal{N}\left(\mu_\text{ex0},\sigma_\text{ex0}^2\right)$ and $\mathcal{N}\left(\mu_\text{ex1},\sigma_\text{ex1}^2\right)$ respectively. The parameters in detail can be given by \emph{Lemma} \ref{lemma_interference} and \eqref{eqn_interference_all} as:
\begin{align}
\nonumber
\begin{split}
\mu_\text{ex0}&=\frac{\pi_{1}B_{0}}{A_{0}}\!+\!\frac{{\mu_{I}}}{Q_{1}A_{0}},~\mu_\text{ex1}\!=\!1\!+\!\mu_\text{ex0},\\
\sigma_\text{ex0}^2&=\frac{\pi_{1}B_{0}(1-B_{0})}{A_{0}^{2}Q_{1}}\!+\!\frac{\pi_{0}\pi_{1}B_{0}^{2}}{A_{0}^{2}}\!+\!\frac{{\sigma_{I}^2}}{Q_{1}^{2}A_{0}^2},\\
\sigma_\text{ex1}^2&=\frac{(1-A_0)}{Q_{1}A_0}\!+\!\sigma_\text{ex0}^2.
\end{split}
\end{align}
The terms of \eqref{eqn_interference_all} have different amplitudes and few of them are dominant. Thus, applying CLT for the Gaussian assumption is insufficient. Section~\ref{Sec:Analysis_on_error} offers a precise observation for a suitable distribution, while keeping the assumption to derive a closed form of theoretical thresholds. The feasibility of the thresholds is shown by the simulated results at the end of this paper.

\subsection{Decision Thresholds}
\label{Sec:decision_thresholds}
Now we define a decision rule as $\arg\max(\hat{\pmb{y}}_{\mathrm{ex}|i})$ and need to find the intersection points of the two distributions (i.e., to find the decision threshold, $\eta_{ex}$, for  \textit{practical ZF} with { $\pmb{H}_{\mathrm{ex}}$}). This leads to the equality 
\begin{equation}
\label{eq_decision_equation}
\frac{\sigma_{\mathrm{ex}1}\!\sqrt{2\pi}}{\sigma_{\mathrm{ex}0}\!\sqrt{2\pi}}\mathrm{exp}\left(\!-\frac{(\eta_{\mathrm{ex}}\!-\!\mu_{\mathrm{ex}0})^2}{2\sigma_{\mathrm{ex}0}^2}\!\right)\!=\!\mathrm{exp}\left(\!-\frac{(\eta_{\mathrm{ex}}\!-\!\mu_{\mathrm{ex}1})^2}{2\sigma_{\mathrm{ex}1}^2}\!\right).
\end{equation}
\txblue{Rewriting the form by taking log on both sides, it becomes a quadratic equation and the closed form of the solution provides}
\begin{equation}
\label{eq_eta_p}
\eta_{\mathrm{ex}}=\mu_{\mathrm{ex}0}+\frac{-1\pm\sqrt{1+\left(\beta_{\mathrm{ex}}-1\right)\left(1+\sigma_{\mathrm{ex}0}^{2}\beta_{\mathrm{ex}}\mathrm{ln}\beta_{\mathrm{ex}}\right)}}{\beta_{\mathrm{ex}}-1}
\end{equation}
for $\beta_{\mathrm{ex}}=\left({\sigma_{\mathrm{ex}1}}/{\sigma_{\mathrm{ex}0}}\right)^2>1$. We denote the bigger one as $\eta_{\mathrm{ex}}^+$ and the smaller one as $\eta_{\mathrm{ex}}^-$. Then the decision rule for the decoded bits $\hat{\pmb{x}}$ becomes:
\begin{equation}
\hat{\pmb{x}} = \delta_{\mathrm{ex}}(\hat{\pmb{y}}_{\mathrm{ex}}) = 
\left\{ 
	\begin{array}{ll}
	0 & \eta_{\mathrm{ex}}^{+}>\hat{\pmb{y}_{\mathrm{ex}}}>\eta_{\mathrm{ex}}^{-} \\
	1 & \text{otherwise}
	\end{array}
\right. \nonumber
\end{equation}
where $\delta_{\mathrm{ex}}(\cdot)$ is the decision function at the receiver for \textit{practical ZF} with { $\pmb{H}_{\mathrm{ex}}$}. The solution is supported by observing that $\eta_{\mathrm{ex}}^{+}$ has a negligible gap $\left(\leq10^{-4}\right)$ from the optimal single threshold obtained by Brute-force search with the simulator.

Note that $\beta_{\mathrm{ex}} \geq 1$ because $\sigma_{\mathrm{ex}1}^2$ is the sum of $\sigma_{\mathrm{ex}0}^2$ and the variance of ${\bar{\pmb{H}}_{\mathrm{ex}}}^{-1}\pmb{H}_{\mathrm{ex}}$. The case where $\beta_{\mathrm{ex}}$ becomes $1$ means that $\hat{\pmb{y}}_{\mathrm{ex}|0}$ and $\hat{\pmb{y}}_{\mathrm{ex}|1}$ have the same variances and it is trivial that the threshold becomes $\frac{\mu_{\mathrm{ex}0}+\mu_{\mathrm{ex}1}}{2}$.

We can  similarly obtain the decision rule for \textit{adaptive thresholding}. We define $\hat{\pmb{y}}_{a|0}$ and $\hat{\pmb{y}}_{a|1}$ as in \eqref{eq_y_p_1} and their means and variances, according to \emph{Theorem}~\ref{theorem_adap&prac}, become $A_{0}\mu_{\mathrm{ex}0}$, $(A_{0}\sigma_{\mathrm{ex}0})^2$, $A_{0}\mu_{\mathrm{ex}1}$, and $(A_{0}\sigma_{\mathrm{ex}1})^2$, respectively. Therefore, we have an equation similar to \eqref{eq_decision_equation} for finding the decision threshold $\eta_{a}$ of \textit{adaptive thresholding} method
\begin{equation}
\frac{A_{0}\sigma_{\mathrm{ex}1}}{A_{0}\sigma_{\mathrm{ex}0}}\,\mathrm{exp}\left(-\frac{(\eta_{a}-A_{0}\mu_{\mathrm{ex}0})^2}{2(A_{0}\sigma_{\mathrm{ex}0})^2}\right)=\mathrm{exp}\left(-\frac{(\eta_{a}-A_{0}\mu_{\mathrm{ex}1})^2}{2(A_{0}\sigma_{\mathrm{ex}1})^2}\right) \nonumber
\end{equation}
that can be solved similarly.

In the same way, we define $\hat{\pmb{y}}_{\mathrm{in}|0}$ and $\hat{\pmb{y}}_{\mathrm{in}|1}$ as the received signal of \textit{practical ZF} with { $\pmb{H}_{\mathrm{in}}$} to obtain the decision rule. They have means and variances denoted by $\mu_{in0}$, $\mu_{in1}$, $\sigma_{\mathrm{in}0}^{2}$, and $\sigma_{\mathrm{in}1}^{2}$, respectively. It is enough to find the exact formulation of those parameters, as we can derive the decision rule by repeating the process carried out for the two previous algorithms.

From \eqref{eq_y_includingI_inMatrixForm}, we can achieve the formulation
\begin{equation}
\nonumber
{\bar{\pmb{H}}_{\mathrm{in}}}=Q_{1}\cdot\begin{bmatrix} A_{0}\!&\!B_{0}\\ B_{0}\!&\!A_{0}\end{bmatrix}
\end{equation}
which has full rank if the inequality $A_{0}^{2}>B_{0}^{2}$ holds. The definitions of $A_0$ and $B_0$ provide the formulation
\begin{eqnarray}
\nonumber
\begin{split}
Q_{1}A_{0}&=\bar{\mathcal{S}}_{11}[0]\!=\!\bar{\mathcal{S}}_{11}(0,t_{s})\!=\!F_{11}(0,t_{s}),\\
Q_{1}B_{0}&=\bar{\mathcal{S}}_{21}[0]\!=\!\bar{\mathcal{S}}_{21}(0,t_{s})\!=\!F_{21}(0,t_{s})
\end{split}
\end{eqnarray}
where $F_{11}$ and $F_{21}$ are derived in \eqref{eqn_fpp_model_function} and shown in Fig.~\ref{Fig:ft}. It is a trivial matter that the inequality $A_{0}^{2}>B_{0}^{2}$ holds in any of the cases. Now we are safe to claim that ${\bar{\pmb{H}}_{\mathrm{in}}}$ has its inverse matrix in the closed form of
\begin{equation}
\label{eq_bar_H_in_inverse}
{\bar{\pmb{H}}_{\mathrm{in}}^{-1}}=\frac{1}{Q_{1}\left(A_{0}^{2}-B_{0}^{2}\right)}\cdot\begin{bmatrix} A_{0}\!&\!\!-B_{0}\\\!-B_{0}\!&\!\!A_{0}\end{bmatrix}
\end{equation}
and utilizing \eqref{eq_y_includingI_inMatrixForm} and \eqref{eq_bar_H_in_inverse} results in
\begin{align}
\nonumber
\begin{split}
\hat{\pmb{y}}_{\mathrm{in}}(1)&=\frac{A_{0}\mathcal{S}_{11}[0]-B_{0}\mathcal{S}_{12}[0]}{Q_{1}\left(A_{0}^{2}-B_{0}^{2}\right)}\cdot{x}_{1}\\
&+\frac{A_{0}\mathcal{S}_{21}[0]-B_{0}\mathcal{S}_{22}[0]}{Q_{1}\left(A_{0}^{2}-B_{0}^{2}\right)}\cdot{x}_{2}+\frac{A_{0}I_{1}-B_{0}I_{2}}{Q_{1}\left(A_{0}^{2}-B_{0}^{2}\right)}.
\end{split}
\end{align}
As a result, the following can be achieved:
\begin{align}
\nonumber
\begin{split}
\hat{\pmb{y}}_{\mathrm{in}|0}(1)&=\frac{A_{0}\mathcal{S}_{21}[0]-B_{0}\mathcal{S}_{22}[0]}{Q_{1}\left(A_{0}^{2}-B_{0}^{2}\right)}\cdot{x}_{2}+\frac{A_{0}I_{1}-B_{0}I_{2}}{Q_{1}\left(A_{0}^{2}-B_{0}^{2}\right)},\\
\hat{\pmb{y}}_{\mathrm{in}|1}(1)&=\frac{A_{0}\mathcal{S}_{11}[0]-B_{0}\mathcal{S}_{12}[0]}{Q_{1}\left(A_{0}^{2}-B_{0}^{2}\right)}+\hat{\pmb{y}}_{\mathrm{in}|0}(1).
\end{split}
\end{align}
With \emph{Lemma} \ref{lemma_interference}, the parameters are finally obtained as:
\begin{align}
\nonumber
\begin{split}
\mu_{\mathrm{in}0}&=\frac{A_{0}-B_{0}}{Q_{1}\left(A_{0}^{2}-B_{0}^{2}\right)}\cdot\mu_{I},~\mu_{\mathrm{in}1}\!=\!1\!+\!\mu_{\mathrm{in}0},\\
\sigma_{\mathrm{in}0}^2&=\left(\frac{A_0}{A_{0}^{2}-B_{0}^{2}}\right)^{2}\!\cdot\!\frac{\pi_{1}B_{0}(1-B_{0})}{Q_{1}}\\
&+\left(\frac{B_0}{A_{0}^{2}-B_{0}^{2}}\right)^{2}\!\cdot\!\frac{\pi_{1}A_{0}(1-A_{0})}{Q_{1}}+\frac{A_{0}^{2}+B_{0}^{2}}{Q_{1}^{2}\left(A_{0}^{2}-B_{0}^{2}\right)^2}\cdot\sigma_{I}^{2},\\
\sigma_{\mathrm{in}1}^2&=\frac{A_{0}^{3}(1-A_{0})+B_{0}^{3}(1-B_{0})}{Q_{1}\left(A_{0}^{2}-B_{0}^{2}\right)^{2}}\!+\!\sigma_{\mathrm{in}0}^2.
\end{split}
\end{align}
Substituting $\mu_{\mathrm{ex}0}$, $\sigma_{\mathrm{ex}0}$, and $\sigma_{\mathrm{ex}1}$ with $\mu_{\mathrm{in}0}$, $\sigma_{\mathrm{in}0}$, and $\sigma_{\mathrm{in}1}$ in \eqref{eq_eta_p} leads to the thresholds $\eta_{\mathrm{in}}^{-}$, $\eta_{\mathrm{in}}^{+}$, and the decision rule $\hat{\pmb{x}} = \delta_{\mathrm{in}}\left(\hat{\pmb{y}}_{\mathrm{in}}\right)$ of \textit{practical ZF} with {$\pmb{H}_{\mathrm{in}}$}.

The two distributions (when the transmitted bit is 0 or 1) of the number of received molecules have different variances. Unlike \txblue{a typical system}, this leads to multiple decision thresholds. Sketched in Fig.~\ref{Fig:two_thresholds} is the error region of the case of utilizing multiple thresholds. When the decision is red, it corresponds to $\hat{\pmb{y}}_{X|0}$ for $X$, in this paper, be either `$\mathrm{ex}$' or `$\mathrm{in}$'. The region that is labeled as \emph{Decision: Blue} corresponds to $\hat{\pmb{y}}_{X|1}$. The x-axis values of intersections between the two curves become $\eta_{X}^{-}$ and $\eta_{X}^{+}$ in ascending order, and the areas labeled $E_1$ and $E_2$ correspond to type-I and type-II error probabilities of the algorithm with the decision rule $\hat{\pmb{x}} = \delta_{X}\left(\hat{\pmb{y}}_{X}\right)$.

\begin{figure}
	\centering
	\includegraphics[width=0.99\columnwidth,keepaspectratio]
	{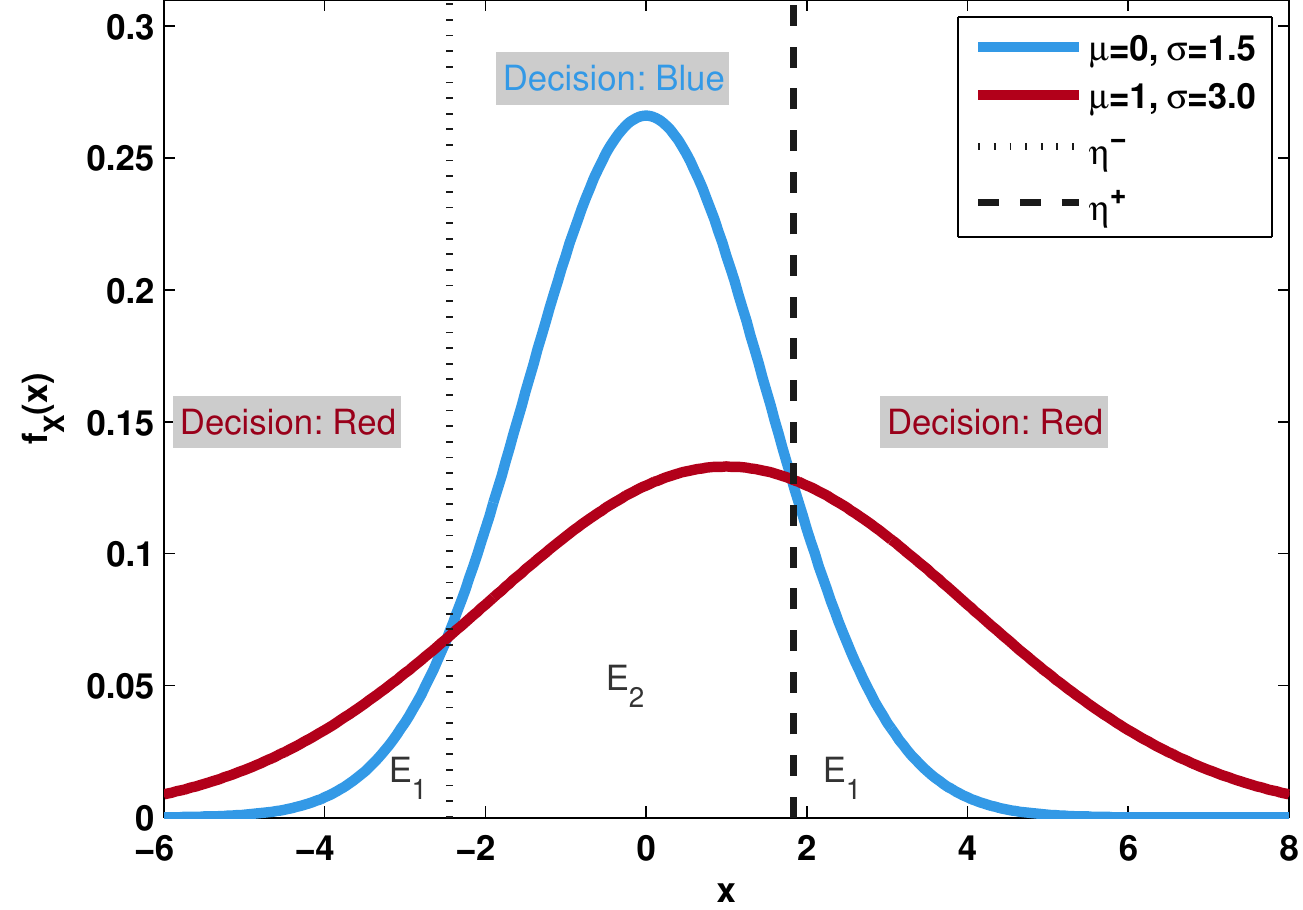}
	\caption{An example case of decision region with multiple thresholds when the mean and the variance of two distributions are different.}
	\label{Fig:two_thresholds}
\end{figure}
\subsection{Error Probability}
\label{Sec:Analysis_on_error}
We define $dt_{Xi}(q)$ as the PDF of detector output at the receive antenna. $X$ is the abbreviation for the detection algorithms such as $\mathrm{ex}$ for \textit{practical ZF} with {$\pmb{H}_{\mathrm{ex}}$} and $\mathrm{in}$ for \textit{practical ZF} with {$\pmb{H}_{\mathrm{in}}$}. The subscript $i$ can be $0$ or $1$, indicating that the conveyed information is bit-0 or bit-1. $dt_{Xi}(q)$ values are assumed to follow the Gaussian distribution $\mathcal{N}\left(\mu_{Xi},\sigma_{Xi}^{2}\right)$ in the previous section, though the simulation output fits poorly with the Gaussian model in the MIMO system, \txblue{as can be observed in Fig.~\ref{Fig:GGD_well_fitted}}. To achieve a more accurate model, we assume that $dt_{Xi}(q)$ follows a generalized Gaussian distribution (GGD) given by
\begin{align}
\begin{split}
dt_{Xi}(q)&=\frac{\beta_{Xi}}{2\alpha_{Xi}\Upgamma(1/\beta_{Xi})}\cdot\mathrm{exp}\left(-\left(|q-\mu_{Xi}|\right)^{\beta_{Xi}}\right),\\
\kappa_{Xi}&=\frac{\Upgamma(5/\beta_{Xi})\Upgamma(1/\beta_{Xi})}{\Upgamma(3/\beta_{Xi})^2},\\
\sigma_{Xi}^{2}&=\frac{\alpha_{Xi}^{2}\Upgamma(3/\beta_{Xi})}{\Upgamma(1/\beta_{Xi})}
\end{split}
\label{eq:scale_parameter}
\end{align}
where $\Upgamma(\cdot)$ denotes the gamma function, $\alpha_{Xi}$, $\beta_{Xi}$ and $\kappa_{Xi}$ denote a scale parameter, shape parameter, and the kurtosis of the distribution. The distribution becomes the Gaussian distribution when the shape parameter equals 2. The simulation results demonstrate that the estimated mean and variance of $dt_{Xi}$ from \txblue{the} Gaussian assumption fit well, but the kurtosis value does not fit. Hence, we need the shape parameter, $\beta$, to generalize the arrival distribution. We do not have an analytical model for $\kappa_{Xi}$. Instead we use the simulation data to find $\beta_{Xi}$. \txblue{The} GGD approximation fits the simulation better than the Gaussian model as shown in Fig.~\ref{Fig:GGD_well_fitted}, for \textit{practical ZF} with {$\pmb{H}_{\mathrm{ex}}$}. The corresponding shape parameters are given in Table~\ref{tab_shape_parameters}.

\begin{figure}[t]
	\centering
	\includegraphics[width=1\columnwidth,keepaspectratio]
	{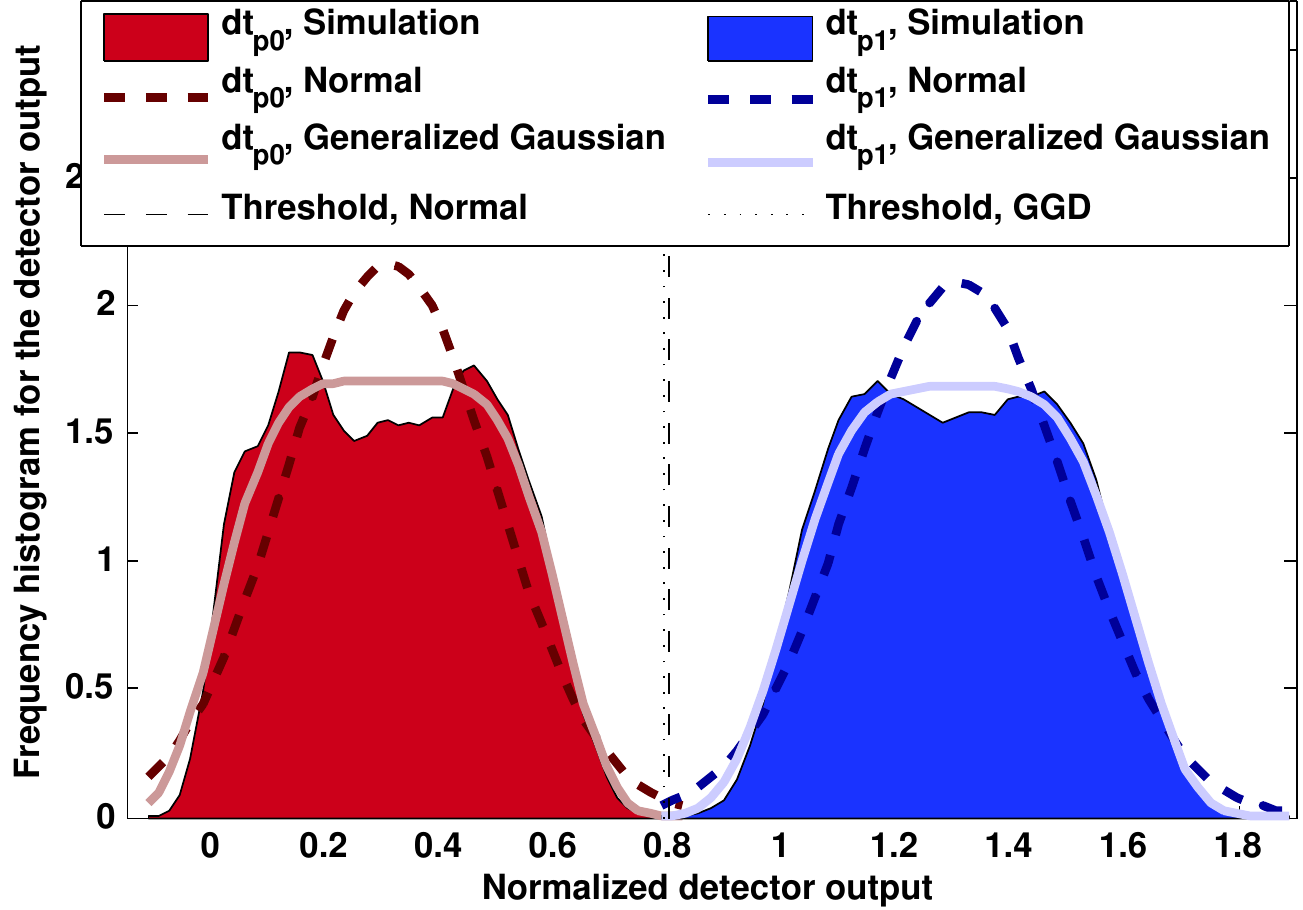}
	\caption{Comparison of simulated and analytical $dt_{pi}$. ($Q_{1}=700$, \txblue{$t_{s}=\SI{0.08}{\second}$}, $\sigma_{n}=10$, $\beta_{\mathrm{ex}0}=5$, $\beta_{\mathrm{ex}1}=4$).}
	\label{Fig:GGD_well_fitted}
\end{figure}
We incorporate the estimated formulations from Section~\ref{Sec:decision_thresholds} with \eqref{eq:scale_parameter} to derive the formulation of error probability. We consider only the upper threshold as the \txblue{below lower threshold error hardly occurs in our system model in practice.} The result becomes:
\begin{align}
\label{eq:Pe}
\begin{split}
& Pe_{X}=f_{\gamma}(\mu_{X0},\alpha_{X0},\beta_{X0})+f_{\gamma}(\mu_{X1},\alpha_{X1},\beta_{X1}),\\
& f_{\gamma}(\mu,\alpha,\beta)=\frac{1}{2}-\frac{\gamma\left(\!\frac{1}{\beta},\left(\frac{|\eta_X^{+}-\mu|}{\alpha}\right)^{\beta}\!\right)}{2\Upgamma\left(\frac{1}{\beta}\right)}
\end{split}
\end{align}
where $\gamma(\cdot)$ denotes the lower incomplete gamma function.

% %\txblue{Note that the function $f_\gamma$ will increase when the value inside the gamma function decreases. Hence, increasing $\alpha$ will increase $f_\gamma$ accordingly.}

%%%%%%%%%%%%%%%%%%%%%%%%%%%%%%%%%%%%%%%%%%%%%%%%%%%%%%%%%%%%%%%%%%%%%%%%%%%%%%%%%
%\begin{table}[t]
%	\caption{Shape parameters}
%	\label{tab_shape_parameters}
%	\centering
%	\begin{tabular}{L{1cm} L{1cm} L{1cm} L{1cm} L{1cm}}
%		\hline
%		$Q_1$	&    300 & 500 & 700 & 900 \\
%		\hline
%		$\beta_{p0}$	& 3 & 3.9 & 5.0 & 6.3\\
%		\hline
%		$\beta_{p1}$ & 2.57 & 3.3 & 4.0 & 4.5\\
%		\hline
%	\end{tabular}
%\end{table}
%%%%%%%%%%%%%%%%%%%%%%%%%%%%%%%%%%%%%%%%%%%%%%%%%%%%%%%%%%%%%%%%%%%%%%%%%%%%%%%%

\begin{table}[t!]
	\caption{Shape parameters}
	\label{tab_shape_parameters}
	\renewcommand{\arraystretch}{1.2}
	\centering
	\begin{tabular} {C{0.9cm}|C{1.1cm}C{1.1cm}C{1.1cm}C{1.1cm}}%{c|cccc}
		\hline
		\multirow{2}{*}{ $\beta$}  & \multicolumn{4}{ c }{$Q_1$}  \\
		& 300 & 500 & 700 & 900 \\ \hline
		$\beta_{p0}$		 & 3.0  & 3.9 & 5.0  & 6.3  \\ 
		$\beta_{p1}$ 		 & 2.57 & 3.3 & 4.0 & 4.5 \\
		\hline
	\end{tabular}
\end{table}
%%%%%%%%%%%%%%%%%%%%%%%%%%%%%%%%%%%%%%%%%%%%%%%%%%%%%%%%%%%%%%%%%%%%%%%%%%%%%%%%

For the following comparison, $t_s$ and the topology parameters are general constant values and $a,b,c$ are coefficients for a second order polynomial.

\begin{theorem}
	\label{theorem_hq1}
	A single positive root exists for function $h(Q_{1})$  that is defined as follows 
	\begin{equation}
	\nonumber
	h(Q_{1})=Q_{1}^{2}\left(\sigma_{\mathrm{ex}0}^{2}-\sigma_{\mathrm{in}0}^{2}\right)=aQ_{1}^{2}+bQ_{1}+c \, 
	\end{equation}
	if we assume that the ILI condition $A_{0}^{2}-2B_{0}^{2}\!>\!0$ and the ISI condition $(A_{0}^{2}-2B_{0}^{2})/3\!>\!\sum_{i=1}^{n}(A_{i}^{2}+B_{i}^{2})$ are satisfied. We define a system satisfying these conditions as having acceptable interference. A priori probabilities of transmitted bit-0 and bit-1 are assumed to be equal (i.e., $\pi_{0}\!=\!\pi_{1}\!=\!0.5$).
\end{theorem}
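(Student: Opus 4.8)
The plan is to show that $h(Q_1)$, once $\sigma_I^2$ is inserted, is a genuine second-order polynomial $aQ_1^2+bQ_1+c$, to determine the signs of $a$ and $c$, and then to invoke Vieta's relations: a real quadratic with $ac<0$ has two real roots of opposite sign and therefore exactly one positive root. A pleasant feature is that the linear coefficient $b$ is never needed. First I would substitute $\sigma_I^2$ from \eqref{eqn_interference_all}, which is itself quadratic in $Q_1$, writing $\sigma_I^2=\pi_0\pi_1 Q_1^2\,\Sigma_2+\pi_1 Q_1\,\Sigma_1+\sigma_n^2$ with $\Sigma_2\triangleq\sum_{i=1}^{n}(A_i^2+B_i^2)$ and $\Sigma_1\triangleq\sum_{i=1}^{n}\!\big(A_i(1-A_i)+B_i(1-B_i)\big)$, into the closed forms for $\sigma_{\mathrm{ex}0}^2$ and $\sigma_{\mathrm{in}0}^2$ from Section~\ref{Sec:decision_thresholds}. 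Multiplying each by $Q_1^2$ and collecting powers, the $Q_1^2$ terms come only from the $\pi_0\pi_1$-weighted pieces, the constant term comes only from the $\sigma_n^2$ pieces, and everything else lands in $b$.

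Next I would establish $c<0$. The two constant contributions combine into $c=\frac{\sigma_n^2}{A_0^2}\big(1-\frac{A_0^2(A_0^2+B_0^2)}{(A_0^2-B_0^2)^2}\big)$, whose sign is that of the numerator $(A_0^2-B_0^2)^2-A_0^2(A_0^2+B_0^2)=-B_0^2(3A_0^2-B_0^2)$. Since $A_0>B_0>0$ (and a fortiori under the ILI condition $A_0^2>2B_0^2$), this is negative, so $c<0$.

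I would then establish $a>0$, which is where the ISI condition enters. Collecting the leading terms and using the same numerator identity factors $a$ neatly as
\begin{equation}
\nonumber
a=\frac{\pi_0\pi_1 B_0^2}{A_0^2}\left(1-\Sigma_2\,\frac{3A_0^2-B_0^2}{(A_0^2-B_0^2)^2}\right),
\end{equation}
so that $a>0$ is equivalent to $\Sigma_2<\frac{(A_0^2-B_0^2)^2}{3A_0^2-B_0^2}$. The ILI condition guarantees $3A_0^2-B_0^2>0$ and $A_0^2-2B_0^2>0$, so both this target bound and the ISI bound $\frac{A_0^2-2B_0^2}{3}$ are positive. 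It then remains to verify the purely algebraic inequality $\frac{A_0^2-2B_0^2}{3}\le\frac{(A_0^2-B_0^2)^2}{3A_0^2-B_0^2}$; cross-multiplying (both denominators positive) reduces this to $3(A_0^2-B_0^2)^2-(A_0^2-2B_0^2)(3A_0^2-B_0^2)=B_0^2(A_0^2+B_0^2)\ge 0$, which holds. Chaining with the ISI hypothesis $\Sigma_2<\frac{A_0^2-2B_0^2}{3}$ yields $\Sigma_2<\frac{(A_0^2-B_0^2)^2}{3A_0^2-B_0^2}$ and hence $a>0$.

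Finally, with $a>0$ and $c<0$ the discriminant $b^2-4ac>0$ and the product of roots $c/a<0$, so the roots are real, distinct, and of opposite sign; exactly one is positive, independently of $b$. The main obstacle is not the concluding Vieta step but the bookkeeping needed to extract the quadratic coefficients cleanly from the nested expression (since $\sigma_I^2$ is itself quadratic in $Q_1$), together with the recognition that the stated ISI condition is precisely calibrated—via the cross-multiplication inequality above—to force $a>0$; I expect that matching the given ISI threshold $(A_0^2-2B_0^2)/3$ to the sharper bound $(A_0^2-B_0^2)^2/(3A_0^2-B_0^2)$ is the delicate point.
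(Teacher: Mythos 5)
Your proof is correct and takes essentially the same route as the paper's: you extract the same coefficients $a$ and $c$, prove $c<0$ via the identity $(A_0^2-B_0^2)^2-A_0^2(A_0^2+B_0^2)=-B_0^2(3A_0^2-B_0^2)$, prove $a>0$ from the ISI condition via the same key computation $3(A_0^2-B_0^2)^2-(A_0^2-2B_0^2)(3A_0^2-B_0^2)=B_0^2(A_0^2+B_0^2)$, and conclude from the signs of $a$ and $c$. The only cosmetic differences are that you factor $a$ and compare the two thresholds, where the paper substitutes the ISI bound into $a$ directly, and that you spell out the Vieta/opposite-sign-roots step that the paper leaves as a one-line assertion.
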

\begin{proof}
See the Appendix~\ref{sec_app_2}.
\end{proof}

\begin{corollary}
	\label{theorem_prac_with_Hin,Hex}
	The \textit{practical ZF} with { $\pmb{H}_{\mathrm{in}}$} has a lower error probability than that of the \textit{practical ZF} with { $\pmb{H}_{\mathrm{ex}}$} if and only if $Q_1$ is above a certain threshold $\mathcal{T}$.
\end{corollary}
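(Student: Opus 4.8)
The plan is to reduce the comparison of the two error probabilities to a comparison of their variances and then to invoke \emph{Theorem}~\ref{theorem_hq1}. The enabling observation is that the normalization built into both detectors makes the two signal means unit-separated: from the parameter lists in Section~\ref{Sec:decision_thresholds} we have $\mu_{\mathrm{ex}1}-\mu_{\mathrm{ex}0}=\mu_{\mathrm{in}1}-\mu_{\mathrm{in}0}=1$. Hence the only quantities that distinguish $Pe_{\mathrm{ex}}$ from $Pe_{\mathrm{in}}$ in \eqref{eq:Pe} are the scale parameters $\alpha_{X0},\alpha_{X1}$, equivalently the variances $\sigma_{X0}^{2},\sigma_{X1}^{2}$ through \eqref{eq:scale_parameter}. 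So I would first argue that, with the mean gap held fixed at $1$, $Pe_X$ is a strictly increasing function of these variances.

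Second, I would establish that monotonicity from the structure of \eqref{eq:Pe}. Each summand $f_{\gamma}(\mu,\alpha,\beta)=\tfrac12-\gamma\!\left(1/\beta,(|\eta_X^{+}-\mu|/\alpha)^{\beta}\right)/(2\Upgamma(1/\beta))$ is decreasing in the regularized incomplete-gamma argument $z=(|\eta_X^{+}-\mu|/\alpha)^{\beta}$, since $\gamma(1/\beta,z)/\Upgamma(1/\beta)$ is increasing in $z$, and $z$ decreases as $\alpha$ grows; for a \emph{fixed} threshold this makes $f_{\gamma}$ increasing in $\alpha$, hence in $\sigma$ via $\sigma^{2}=\alpha^{2}\Upgamma(3/\beta)/\Upgamma(1/\beta)$. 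Because $\eta_X^{+}$ is the intersection (MAP, equal-prior) threshold where $dt_{X0}(\eta_X^{+})=dt_{X1}(\eta_X^{+})$, the threshold is stationary for the total error, so an envelope argument removes the indirect dependence of $Pe_X$ on $\eta_X^{+}$ and the total error inherits the monotone dependence on $(\sigma_{X0}^{2},\sigma_{X1}^{2})$.

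Third, I would supply the variance ordering from \emph{Theorem}~\ref{theorem_hq1}. Since $\sigma_{\mathrm{ex}0}^{2}-\sigma_{\mathrm{in}0}^{2}=h(Q_{1})/Q_{1}^{2}$ and $h$ has a single positive root $\mathcal{T}$ with leading coefficient $a>0$ (as established in the proof of Theorem~\ref{theorem_hq1} under the acceptable-interference conditions), this difference is negative for $0<Q_{1}<\mathcal{T}$ and positive for $Q_{1}>\mathcal{T}$. For the bit-1 variances I would note that the extra terms of $\sigma_{\mathrm{ex}1}^{2}$ and $\sigma_{\mathrm{in}1}^{2}$ beyond $\sigma_{X0}^{2}$ are both $O(1/Q_{1})$ and coincide at $B_{0}=0$, so $\sigma_{\mathrm{ex}1}^{2}-\sigma_{\mathrm{in}1}^{2}$ carries the sign of $h(Q_{1})$ away from $\mathcal{T}$; combined with the bit-0 ordering this yields $\sigma_{\mathrm{ex}0}>\sigma_{\mathrm{in}0}$ and $\sigma_{\mathrm{ex}1}>\sigma_{\mathrm{in}1}$ precisely when $Q_{1}>\mathcal{T}$.

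Combining the monotonicity of Step~2 with the variance ordering of Step~3 then gives $Pe_{\mathrm{in}}<Pe_{\mathrm{ex}}$ for $Q_{1}>\mathcal{T}$ and the reverse for $Q_{1}<\mathcal{T}$, with equality at $\mathcal{T}$, which is exactly the claimed ``if and only if.'' I expect the main obstacle to be the bit-1 step together with the differing shape parameters: the envelope/monotonicity argument is cleanest when the GGD shapes match, whereas $\beta_{X0},\beta_{X1}$ differ across algorithms and across bits, so one must verify that the variance ordering is not reversed by the $\beta$'s and that the crossover of $\sigma_{\mathrm{ex}1}^{2}-\sigma_{\mathrm{in}1}^{2}$ coincides with that of the bit-0 variances. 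If a fully coincident crossover cannot be shown, the fallback is to prove existence of \emph{some} threshold via the large-$Q_{1}$ asymptotics, where the $1/Q_{1}$ terms vanish and the comparison collapses to $\sigma_{0}$ governed by Theorem~\ref{theorem_hq1}, combined with a single-crossing argument for $Pe_{\mathrm{ex}}-Pe_{\mathrm{in}}$.
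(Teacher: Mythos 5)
Your proposal is correct and takes essentially the same route as the paper's own proof: exploit the unit mean gap $\mu_{X1}-\mu_{X0}=1$ for $X\in\{\mathrm{ex},\mathrm{in}\}$ to reduce the error-probability comparison to a variance comparison, then invoke \emph{Theorem}~\ref{theorem_hq1} (positive leading coefficient $a$, single positive root $\mathcal{T}$) so that the sign of $h(Q_1)$ decides the ordering. Where you differ you are merely stricter than the paper, which asserts the monotonicity outright (``it is intuitive that higher variances result in higher probabilities of error'') and treats the bit-0 variance ordering as sufficient; your envelope argument and your flagged obstacle---that $\sigma_{\mathrm{ex}1}^{2}-\sigma_{\mathrm{in}1}^{2}$ need not cross zero exactly at $\mathcal{T}$---are refinements of gaps the published proof leaves open, not departures from its approach.
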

\begin{proof}
	Since $\mu_{X1}\!-\!\mu_{X0}\!=\!1$ holds for both $X\!=\!\mathrm{ex}$ and $X\!=\!\mathrm{in}$, the error probability depends on the variances. It is intuitive that higher variances result in higher probabilities of error. Raising $\sigma_{X0}$ concurrently increases $\sigma_{X1}$. Hence, we can claim that $\sigma_{X0}$ is proportional to $Pe_{X}$. Therefore, to compare the error probability, it is sufficient to compare the variances.
	 
	The variances $\sigma_{\mathrm{ex}0}^2$ and $\sigma_{\mathrm{in}0}^2$ are functions \txblue{of} $Q_1$. Multiplying $Q_1^2$ with both $\sigma_{\mathrm{ex}0}^2$ and $\sigma_{\mathrm{in}0}^2$ yields them to be quadratic polynomials. The difference between them becomes $h(Q_{1})$ in \emph{Theorem~\ref{theorem_hq1}}. Let $\mathcal{T}$ be the positive root of $h(Q_1)$. Therefore $h(Q_1)$ is positive $\forall{Q}_{1}\!>\!\mathcal{T}$ since $a$ is positive (see the Appendix for the proof of \emph{Theorem~\ref{theorem_hq1}}). Positive $h(Q_1)$ implies $\sigma_{\mathrm{ex}0}\!>\!\sigma_{\mathrm{in}0}$ thus $Pe_{\mathrm{ex}}\!>\!Pe_{\mathrm{in}}$. Therefore, the \textit{practical ZF} with { $\pmb{H}_{\mathrm{in}}$} shows a lower error probability than the \textit{practical ZF} with { $\pmb{H}_{\mathrm{in}}$} when $Q_1$ is larger than threshold $z$.
\end{proof}
 
 For the example case of $t_{s}\!=\!\SI{0.08}{\second}$ and $\sigma_{n}\!=\!10$, the computed threshold is $\mathcal{T}\!=\!9.34\times10^{5}$. It is also observed that $\sigma_{n}^{2}$, the noise power, only affects value $c$ in \emph{Theorem~\ref{theorem_hq1}}. Decreasing $\sigma_{n}^{2}$ increases the value $c$ without changing $a$ and $b$, which results in a lower $\mathcal{T}$. This means that if $\sigma_{n}^{2}$ is decreased enough so that $\mathcal{T}$ falls below $Q_1$ \txblue{then} \textit{practical ZF} with { $\pmb{H}_{\mathrm{in}}$} would perform better. As a result, we could claim from \emph{Corollary}~\ref{theorem_prac_with_Hin,Hex} that \textit{practical ZF} with { $\pmb{H}_{\mathrm{in}}$} provides a lower error rate when the signal-to-noise ratio (SNR) is higher than a certain point. We will further investigate the SNR threshold in future work.

\section{Numerical Results}
\label{Sec:results}
The system parameters for theoretical analysis are given in Table~\ref{tab_result_params}.
We first give the definition of signal-to-interference-ratio (SIR) metric in a molecular MIMO system and analyze the effect of  distance, $r_r$, and $h$. Next, we use the BER as the performance metric and analyze the effect of $Q_1$ and $t_s$. 
%%%%%%%%%%%%%%%%%%%%%%%%%%%%%%%%%%%%%%%%%%%%%%%%%%%%%%%%%%%%%%%%%%%%%%%%%%%%%%%%
\begin{table}[h]
	\caption{Range of Parameters Used in the Analysis}
	\label{tab_result_params}
	\centering
	\begin{tabular}{L{3cm} L{1cm} L{3cm}}
		\hline
		Parameter &  Variable &Values \\
		\hline
		Diffusion cefficient  	& $D$  &      		\txblue{$\SI{50}{\micro\meter^2/\second}$} \\
		Distance 				& $d$  &     		\txblue{$\{ 2, 4 \} \si{\micro\meter}$} \\
		Radius of the receiver 	& $r_r$&      		\txblue{$\{ 2, 4 \} \si{\micro\meter}$} \\
		Bulge separation 		& $h$  &      		\txblue{$\{ 1, 2 \} \si{\micro\meter}$} \\
		\# molecules for sending bit-1 	&$Q_1$ &	$\{100\sim\txblue{1000}\}$ molecules \\
		Probability of sending bit-1 & $\pi_1$ & 	0.5 \\
		Symbol duration 		&$t_s$ 		&  		\txblue{$\{ 0.05\sim1 \} \si{\second}$} \\
		Molecular noise variance& $\sigma_{n}^{2}$& 100 \\
		Bit sequence length 	& 	   		&    	$5\times10^{5}$\\
		Replication 			&      		& 		20 \\
		\hline
	\end{tabular}
\end{table}
%%%%%%%%%%%%%%%%%%%%%%%%%%%%%%%%%%%%%%%%%%%%%%%%%%%%%%%%%%%%%%%%%%%%%%%%%%%%%%%%

\subsection{SIR Analysis}
\label{result_SIR}
\txblue{For the definition of SIR, we consider a one-shot signal from Tx$_1$ i.e., a bunch of molecules are emitted at $t\!=\!0$.} SIR is defined as the ratio of the expected number of molecules coming from the intended transmitter in the intended time slot to the mean ILI plus ISI for just a one-shot signal.  
\begin{equation}
\text{SIR}=\frac{F_{11}(0,t_s)}{F_{11}(t_{s},\infty)+F_{12}(0,\infty)}. \nonumber
\end{equation}
Note that this definition is specific to the molecular communication case and explains the clearness of the mean signal term in the received signal. 

\begin{figure}[t]
	\centering
	\includegraphics[width=1\columnwidth,keepaspectratio]
	{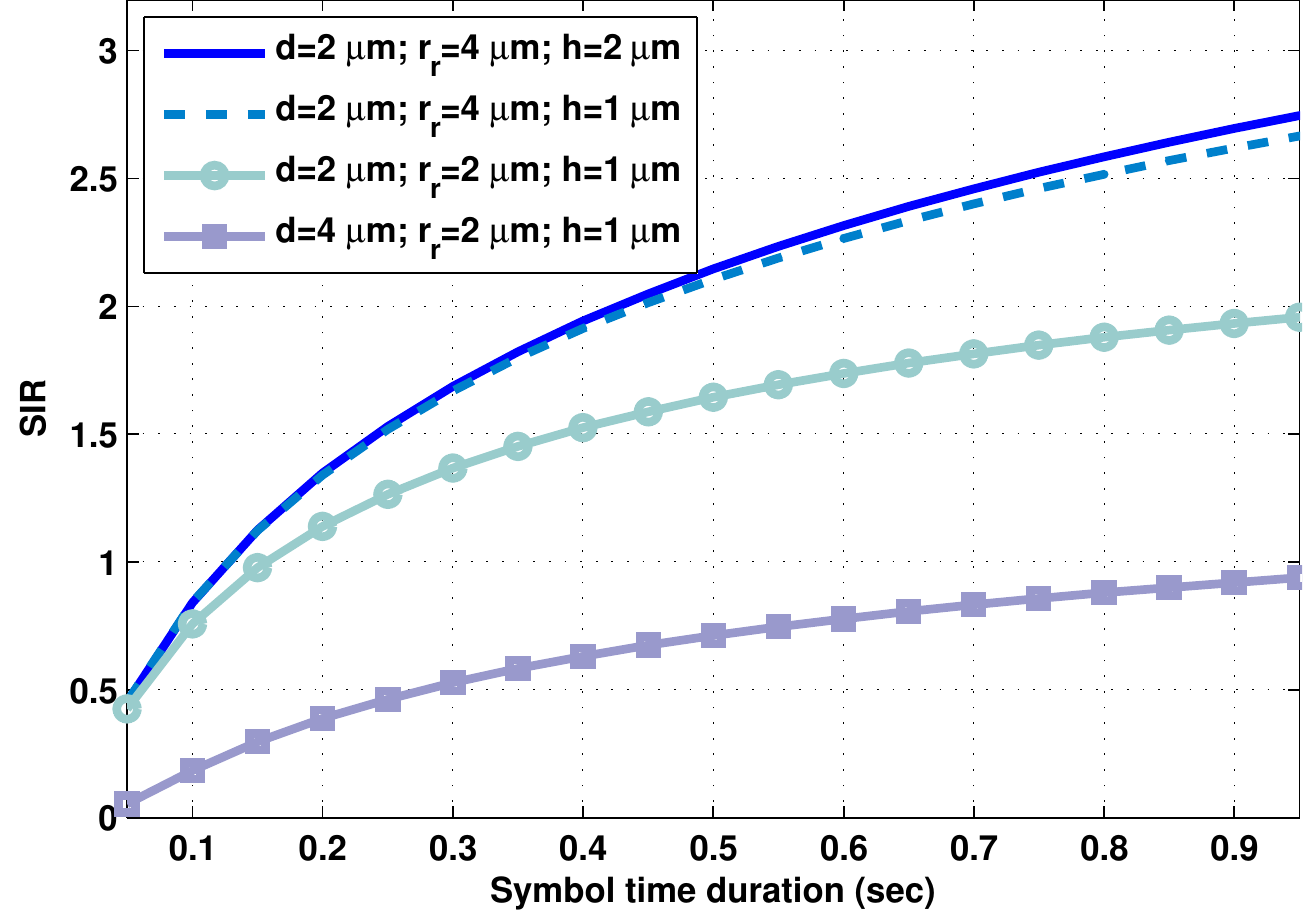}
	\caption{SIR plots for different topologies (\txblue{$D=\SI{50}{\micro\meter^2/\second}$}).}
	\label{Fig:SIR}
\end{figure}
Fig.~\ref{Fig:SIR} shows the SIR values with varying $t_s$ for different topology parameters (such as the distance $d$, radius of the receiver $r_r$, and receiver antenna separation $h$). The best enhancement, in terms of SIR, is observed by reducing the distance between the transmitter and the receiver. Increasing the antenna size also gives merit. For the given parameters, however, a higher $h$ results in a less significant improvement but  non-negligible. Thus, we conclude that for the given parameters the ISI term is more dominant than the ILI term.

For the rest of the performance evaluation, we set the topological parameters as \txblue{$d=\SI{2}{\micro\meter}$}, \txblue{$r_{r}=\SI{4}{\micro\meter}$}, \txblue{$h=\SI{2}{\micro\meter}$}, and \txblue{$D=\SI{50}{\micro\meter^2/\second}$}. The selected system parameters and the fitted values for model parameters are given in Table~\ref{tab_fitting_selected_params_all}. Utilizing the fitted values enables us to estimate $F_{ij}(t)$ analytically.
%%%%%%%%%%%%%%%%%%%%%%%%%%%%%%%%%%%%%%%%%%%%%%%%%%%%%%%%%%%%%%%%%%%%%%%%%%%%%%%%
\begin{table}[h]
	\caption{Fitted model parameters for the selected topology (\txblue{$d=\SI{2}{\micro\meter}$}, \txblue{$r_{r}=\SI{4}{\micro\meter}$}, \txblue{$h=\SI{2}{\micro\meter}$},  \txblue{$D=\SI{50}{\micro\meter^2/\second}$}).}
	\label{tab_fitting_selected_params_all}
	\centering
	\begin{tabular}{l L{1.7cm} L{1.7cm} L{1.7cm}}
		\hline  
		Function                & $b_1$      & $b_2$        & $b_3$     \\
		\hline  
		$F_{11}(t)$          & 0.9155	     & 0.5236       & 0.5476	\\
		$F_{12}(t)$          & 0.2981	     & 0.5315       & 0.5363	\\
		\hline
	\end{tabular}
\end{table}

%%%%%%%%%%%%%%%%%%%%%%%%%%%%%%%%%%%%%%%%%%%%%%%%%%%%%%%%%%%%%%%%%%%%%%%%%%%%%%%%
%%%%%%%%%%%%%%%%%%%%%%%%%%%%%%%%%%%%%%%%%%%%%%%%%%%%%%%%%%%%%%%%%%%%%%%%%%%%%%%%

% % % % % % % % % % % % % % % % % % % % % % % % % %
\begin{figure}[t]
	\centering
	\includegraphics[width=1\columnwidth,keepaspectratio]
	{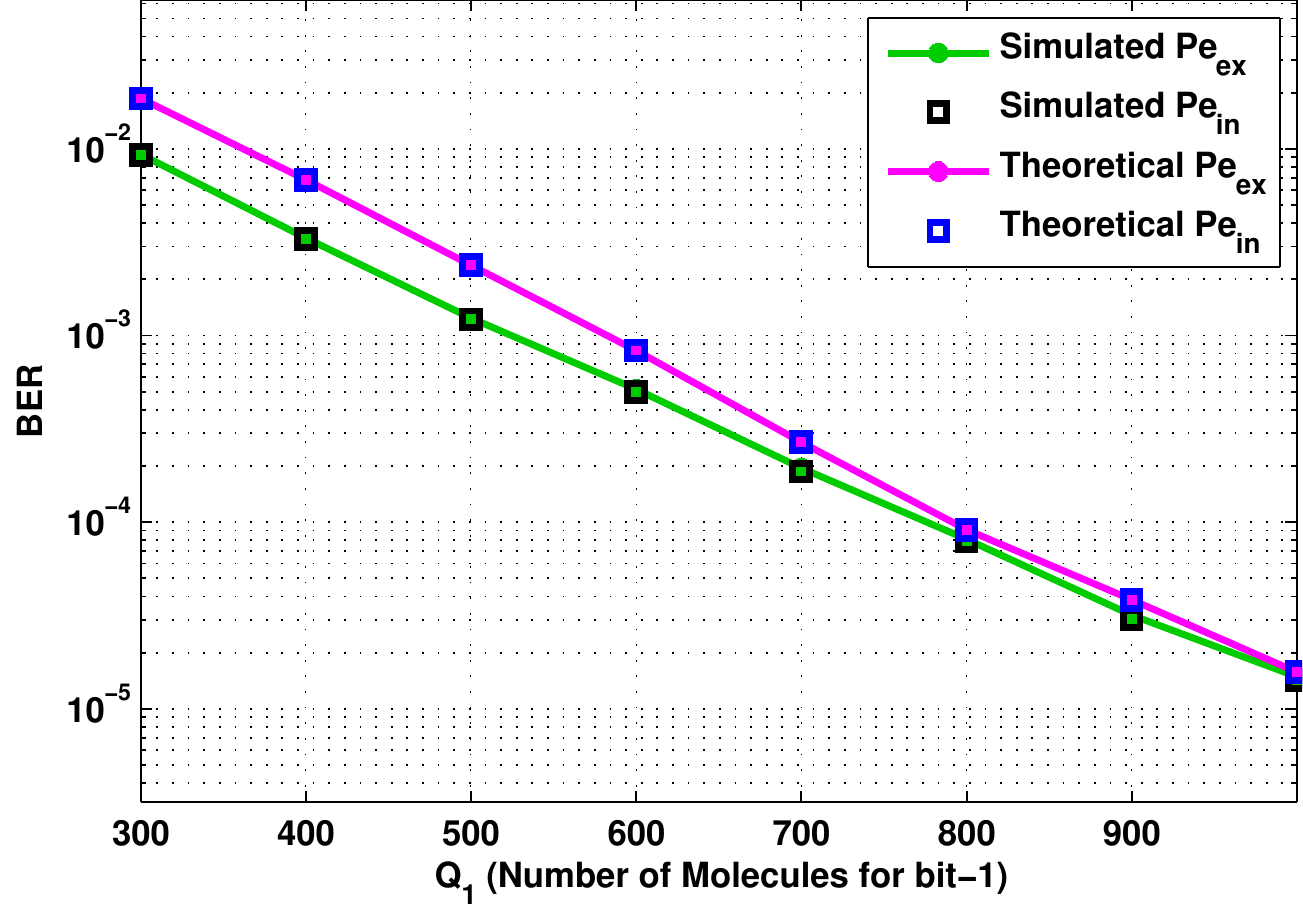}
	\caption{BER performance comparison of the simulation results and the theoretical formulation of the practical zero forcing algorithms (\txblue{$t_s = \SI{80}{\milli\second}$}, $\sigma_{n}=10$).}
	\label{Fig:Sim_vs_Anal}
\end{figure}
% % % % % % % % % % % % % % % % % % % % % % % % % %

\subsection{BER Analysis}
\label{result_BER}
In this section, we analyze the BER with respect to varying $Q_1$ and $t_s$ and compare the performance gain of each of the proposed detection algorithms. Each Tx sends $5\!\times\!10^{5}$ bits with an equal probability of sending bit-1 and bit-0. Most prior work has shown that, with an appropriate symbol duration, the current symbol is affected predominantly by one previous symbol with the rest being \txblue{less dominant}~\cite{kuran2010energyMF, kim2013novelMT, arjmandi2013diffusionBN}. Therefore, we considered in the simulation four slots of interference, \txblue{which is enough to contain all the dominant effects and provides a reasonable run time}. We examined all the thresholds between $0$ and $1$ with a $10^{-3}$ interval and checked the optimal threshold for each $Q_1$. The empirically found fixed threshold $\eta_f$ was selected as $0.2$, \txblue{which is the average of the empirical optimal values for each $Q_1$.}
%Achieved threshold by the simulator and the mathematics shows a negligible differences ($\textless{10}^{-2}$). 

Figure~\ref{Fig:Sim_vs_Anal} shows the BER performance of \textit{practical ZF} with { $\pmb{H}_{\mathrm{ex}}$} and {$\pmb{H}_{\mathrm{in}}$} in both theory and simulation \txblue{while $Q_1$ varies from 300 to 1000}. As expected, we confirm, through this result, the same performance of two \textit{practical ZF} algorithms. It shows that when the value of $Q_1$ is low (i.e., when the signal power is low) a clear gap emerges between the theory and the simulation. The gap between analytical and numerical results decreases as the power increases, which implies that the GGD approximation is better with a higher transmit power.

% % % % % % % % % % % % % % % % % % % % % % % % % %
\begin{figure}[t]
	\centering
	\includegraphics[width=1\columnwidth,keepaspectratio]
	{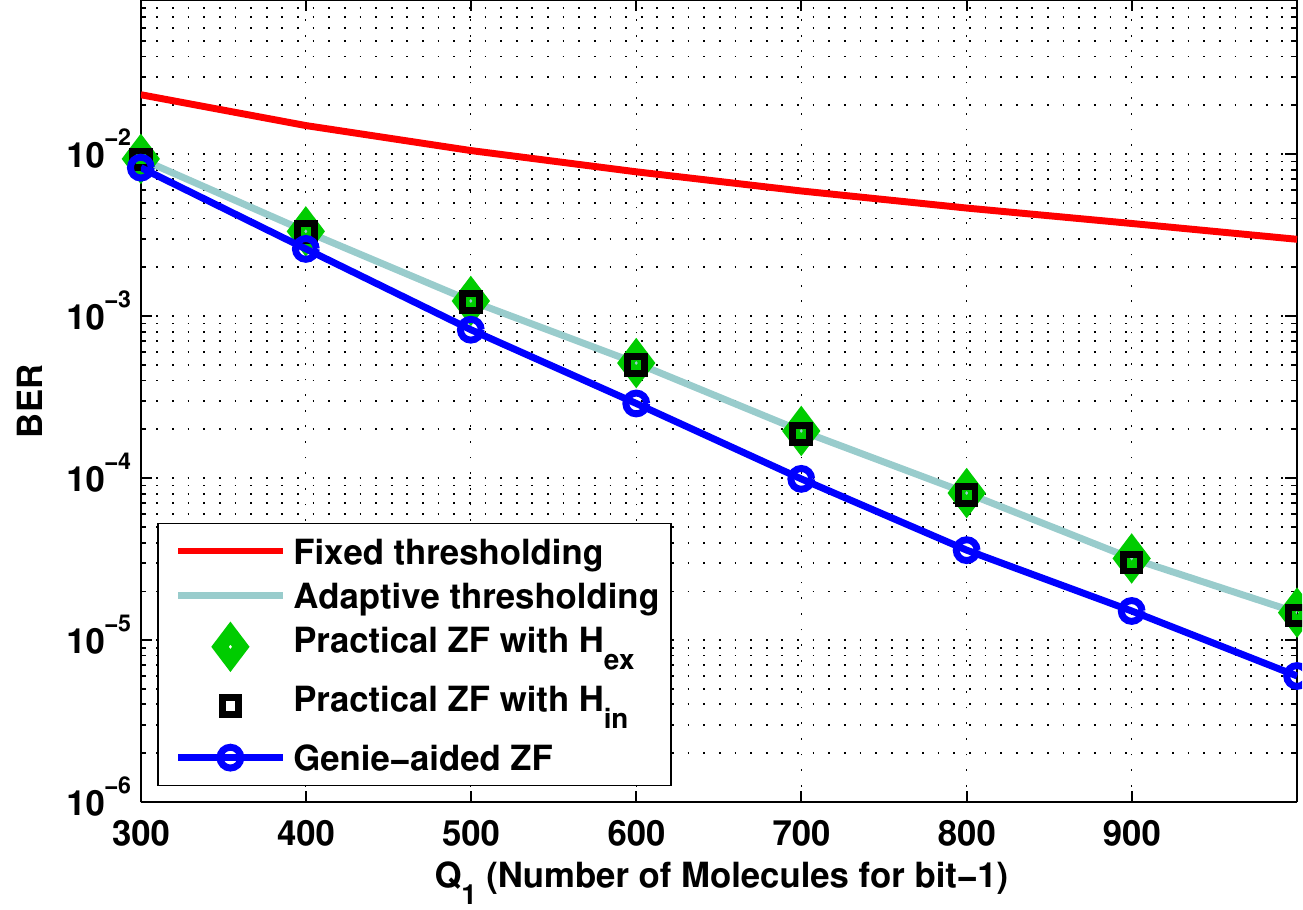}
	\caption{BER performance of the detection algorithms (\txblue{$t_s = \SI{80}{\milli\second}$}).}
	\label{Fig:BER_vs_Q}
\end{figure}
% % % % % % % % % % % % % % % % % % % % % % % % % %
Figure~\ref{Fig:BER_vs_Q} shows the BER performance of the detection algorithms while $Q_1$ varies from 300 to 1000. The first observation is that the \textit{adaptive thresholding} and the \textit{practical ZF} with {$\pmb{H}_{\mathrm{ex}}$} provide the same BER results, as proved in \emph{Theorem~\ref{theorem_adap&prac}}. Increasing $Q_1$ (i.e., the signal power) decreases the BER for all the detection algorithms. However, the improvement of \textit{fixed thresholding} is significantly lower than those of the other methods. When the instantaneous $\pmb{H}$ is known at the receiver, \textit{Genie-aided zero forcing} is applicable and gives the best performance. Note that obtaining that information, however, is not by nature feasible in molecular communications. On the other hand, obtaining the optimal threshold by knowing $Q_1$ and $\pi_1$ is feasible and leads to a performance that is close to \textit{Genie-aided zero forcing}.

% % % % % % % % % % % % % % % % % % % % % % % % % %
\begin{figure}[t]
	\centering
	\includegraphics[width=1\columnwidth,keepaspectratio]
	{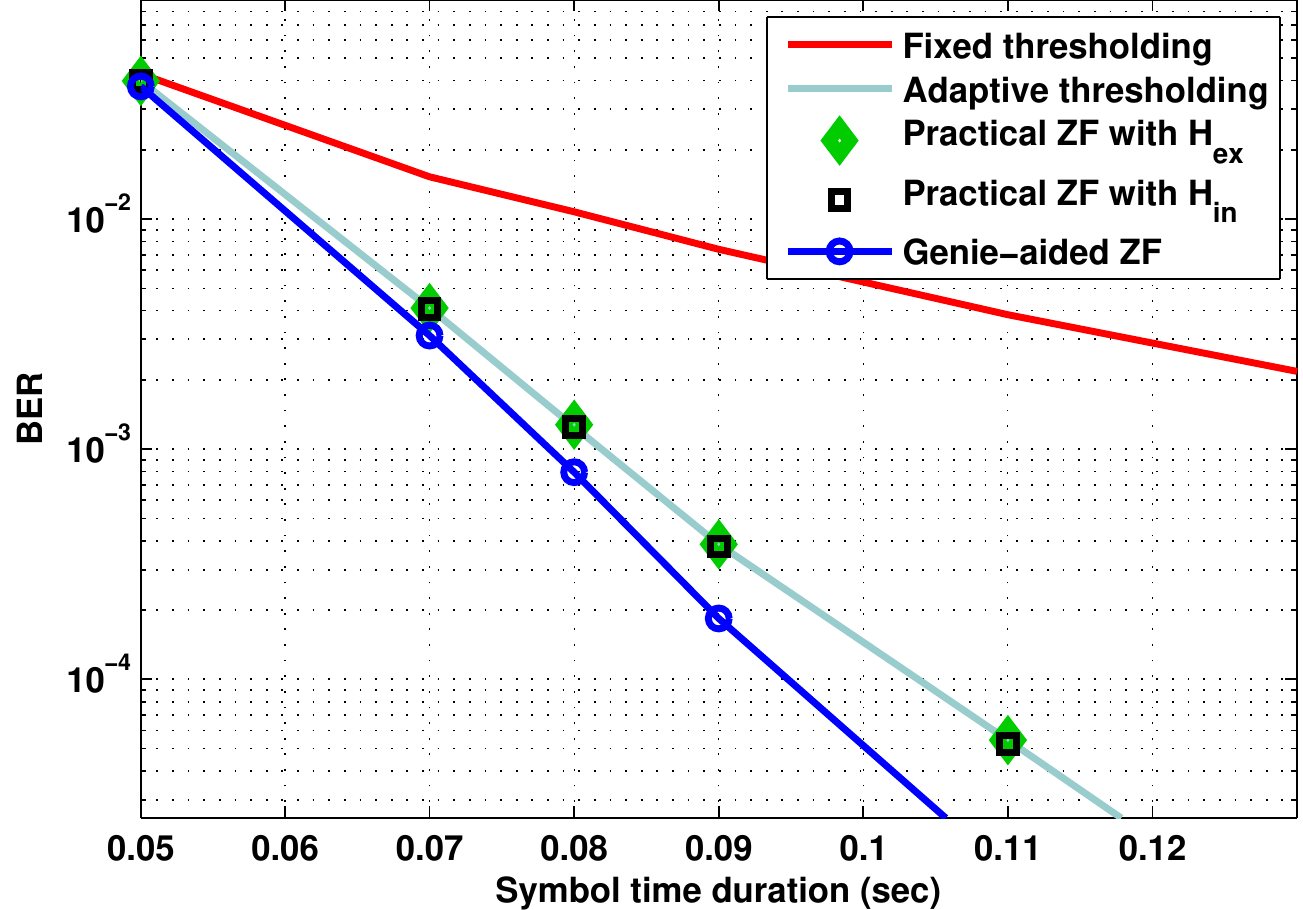}
	\caption{BER performance of the detection algorithms ($Q_{1}=500$).}
	\label{Fig:BER_vs_ts}
\end{figure}
% % % % % % % % % % % % % % % % % % % % % % % % % %
%

Figure~\ref{Fig:BER_vs_ts} illustrates the BER performance against $t_s$ from \txblue{$\SI{50}{\milli\second}$} to \txblue{$\SI{130}{\milli\second}$}. It shows that increasing $t_s$ gives faster improvements in terms of BER relative to increasing $Q_1$. This means that to achieve a lower BER, it is more effective to decrease the information rate than to increase the transmit power. In Fig.~\ref{Fig:Throughput} we compare the throughput of the proposed MIMO and the conventional SISO systems. For this experiment, we used the same amount of emission molecules. As can be seen from the figure, we could confirm that the proposed MIMO system asymptotically achieves double throughput compared to the SISO system. \txblue{Note that the \textit{practical ZF} with {$\pmb{H}_{\mathrm{ex}}$} and the \textit{practical ZF} with {$\pmb{H}_{\mathrm{in}}$} show nearly the same performance in both Fig.~\ref{Fig:BER_vs_Q} and Fig.~\ref{Fig:BER_vs_ts}.}

\section{Testbed}
\label{Sec:Testbed}
We implemented our own macro-scale version to realize the theoretical studies on molecular MIMO systems. The transmitter and the receiver in our system are equipped with multiple transmit nozzles and receive sensors to increase the data rate. The system is low cost, and the testbed platform is modifiable and re-programmable~\cite{infocom15, mobicom15}.

In Fig.~\ref{Fig:Hardware_layout}, the main components of the testbed are shown. The testbed comprises a molecular MIMO transmitter and receiver. The propagation distance between \txblue{the transmitter and the receiver} is about one meter. The transmitter is composed of: 1) a simple user interface for text entry, 2) a microcontroller for executing transmitter algorithms, 3) two reservoirs for chemicals, and 4) two chemical-releasing mechanisms (i.e., two sprays nozzles and an air compressor). At the receiver, the hardware consists of: 1) two chemical sensors for a MIMO system (i.e., two \txblue{MQ-3} alcohol sensors), 2) two microcontrollers that forward an electrical signal that originate from chemical sensors to a computer, and 3) a computer for demodulating, decoding the signal, and visualizing the results.

\subsection{Operation}
Molecular MIMO communication in our testbed is carried out in three steps--encoding, propagation, and detection. In the encoding step, the transmitter separates the input strings into two parts and converts them to bit sequences using international telegraph alphabet no.~2 (ITA2). The ITA2-encoding method is able to convert each letter (among 26 letters) to its corresponding 5-bit binary number (e.g., alphabet `A' = 11000).

\txblue{In the emission step, the transmit antennas (spray nozzles) Tx$_1$ and Tx$_2$ are fed to transmit independent bit sequences that are coded following the encoding step.} The transmitter modulates the bits using binary concentration shift keying (BCSK). The transmit antennas spray once in a symbol duration to transmit bit-1; for bit-0, they do not spray, as the modulation is defined theoretically in Section~\ref{communication}. At the start of communication, two transmit antennas spray once to notify the beginning of transmission. The termination of communication is determined by decoding the bit sequence 00000 at both receive antennas.

% % % % % % % % % % % % % % % % % % % % % % % % % %
\begin{figure}[t]
	\centering
	\includegraphics[width=1\columnwidth,keepaspectratio]
	{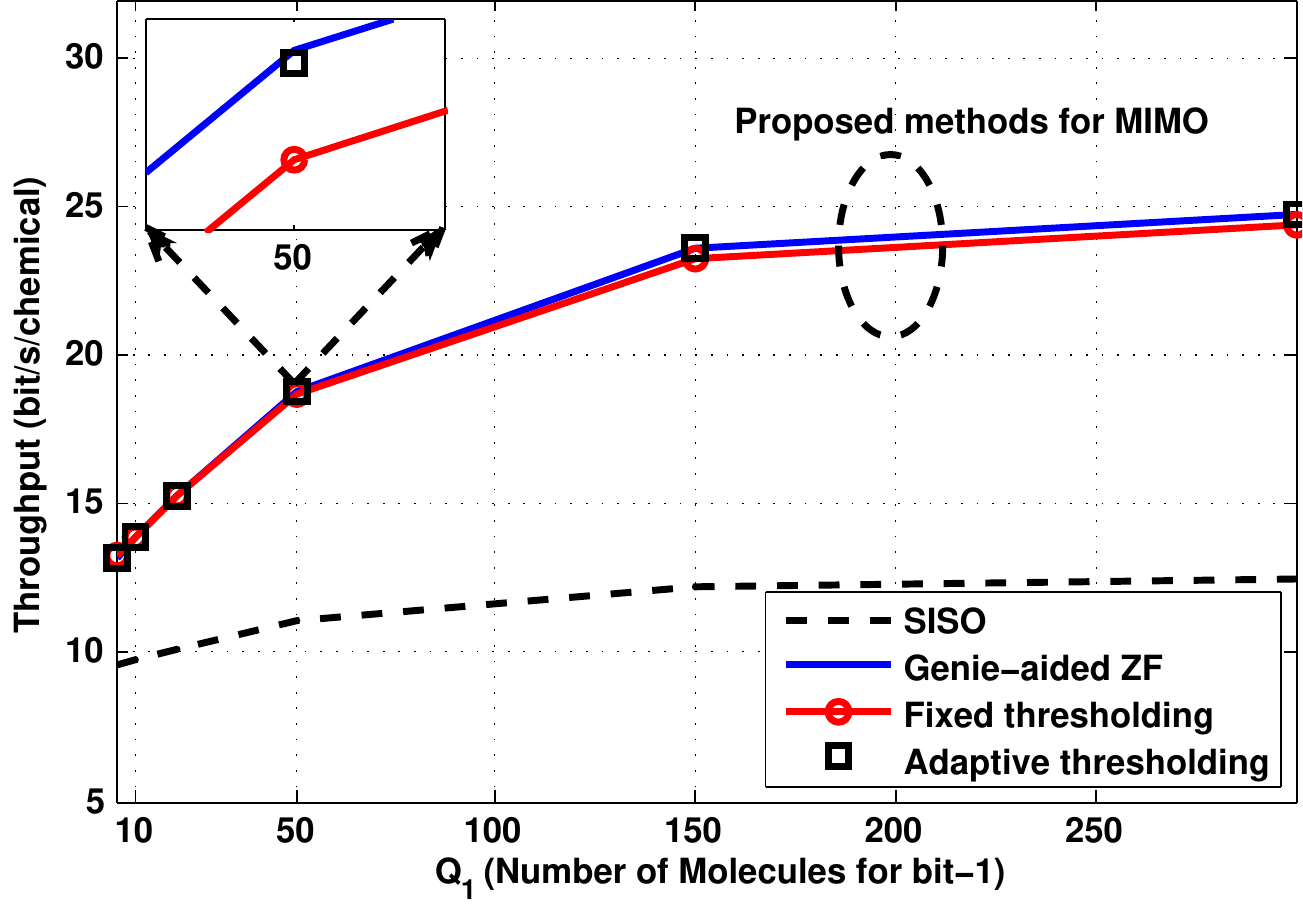}
	\caption{Throughput comparisons (\txblue{$t_s = \SI{80}{\milli\second}$}). $\text{Throughput} = M \cdot  L/{t_s} \cdot (1- \text{BER}) $, where $M$ and $L$ denote the modulation order and the number of spatial streams.}
	\label{Fig:Throughput}
\end{figure}
% % % % % % % % % % % % % % % % % % % % % % % % % %

The emitted molecules diffuse in the air and arrive at the receiver. At the receiver site (i.e., chemical sensors), molecule concentrations are converted to electrical signals (i.e., sensor voltage). \txblue{Voltages are estimated by 250Hz; five of them are averaged to be robust on the error which results in 50Hz of reading frequency.} The coherence time is zero in the diffusion-based molecular communication. Hence we cannot use existing RF communication methods. Therefore, we need algorithms that are designed for diffusion-based molecular communications. Having a lack of control over the number of molecules released, more consistent spray is needed to implement various algorithms. Our testbed has a capability of measuring the reception voltage that is proportional to the number of received molecules. It enables us to use \textit{adaptive thresholding} for the molecular MIMO communication testbed.

 %such as adaptive thresholding, which requires control over the \txblue{number} of sprayed molecules. Our testbed has a capability of measuring the reception voltage to count the number of received molecules. The control enables us to use \textit{adaptive thresholding} for the molecular MIMO communication testbed.

\subsection{Hardware Layout}
\begin{figure*}[t]
	\centering
	\includegraphics[width=1.98\columnwidth,keepaspectratio]
	{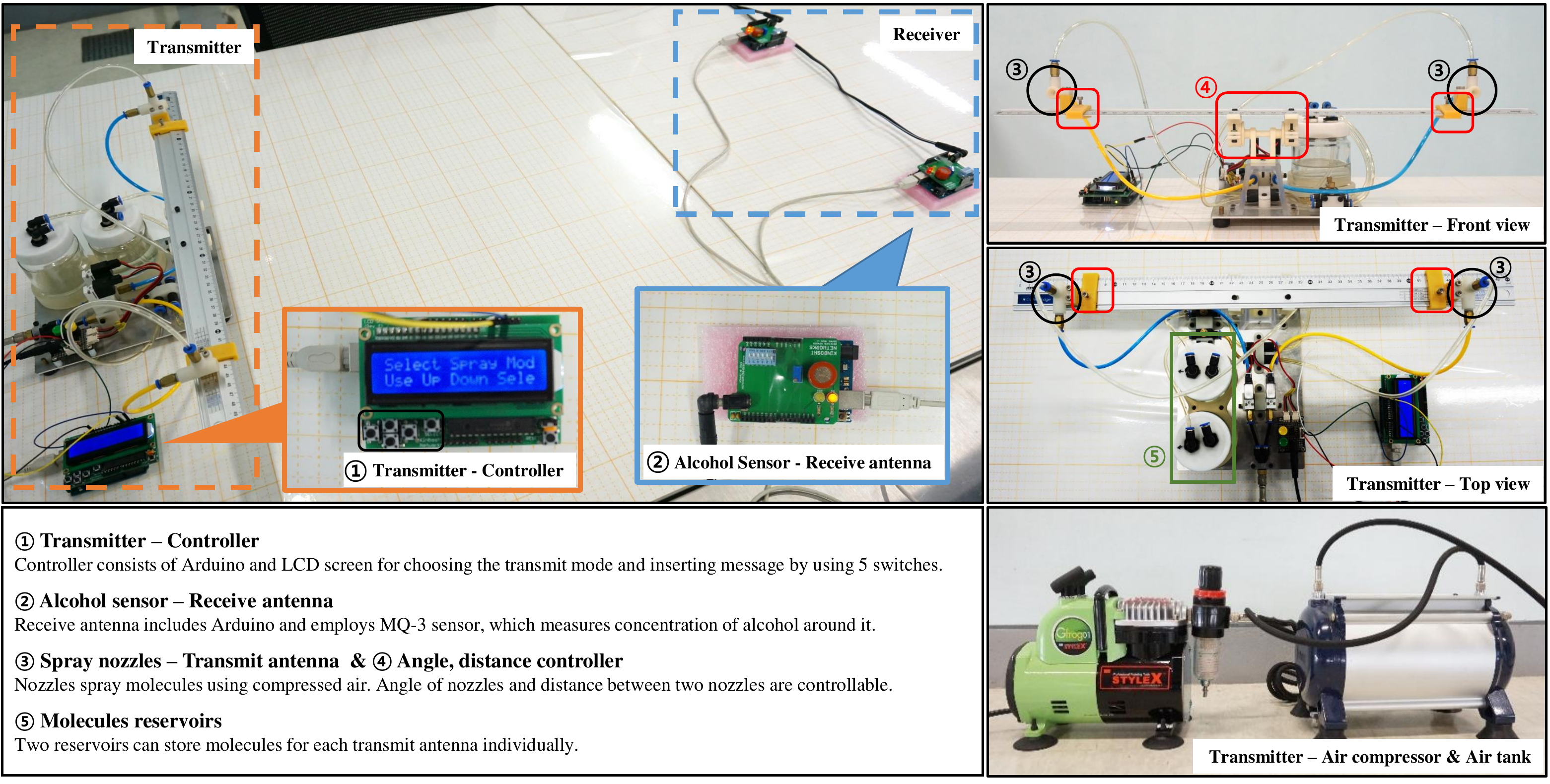}
	\caption{The tabletop molecular MIMO communication platform.}
	\label{Fig:Hardware_layout}
\end{figure*}

\txblue{We have two more implementable algorithms: \textit{fixed thresholding} and \textit{practical ZF} with \textit{$\pmb{H}_{\mathrm{in}}$}. However, to show a clear comparison between them, the transmitter should have the capability of controlling the amounts of emission so that the BER is estimated accordingly. We leave this to the future work with an upgraded version of the testbed and instead focus in this paper on showing the feasibility of the micro-scale strategy on the macro-scale testbed.}

\begin{figure}
	\centering
	\includegraphics[width=1\columnwidth,keepaspectratio]
	{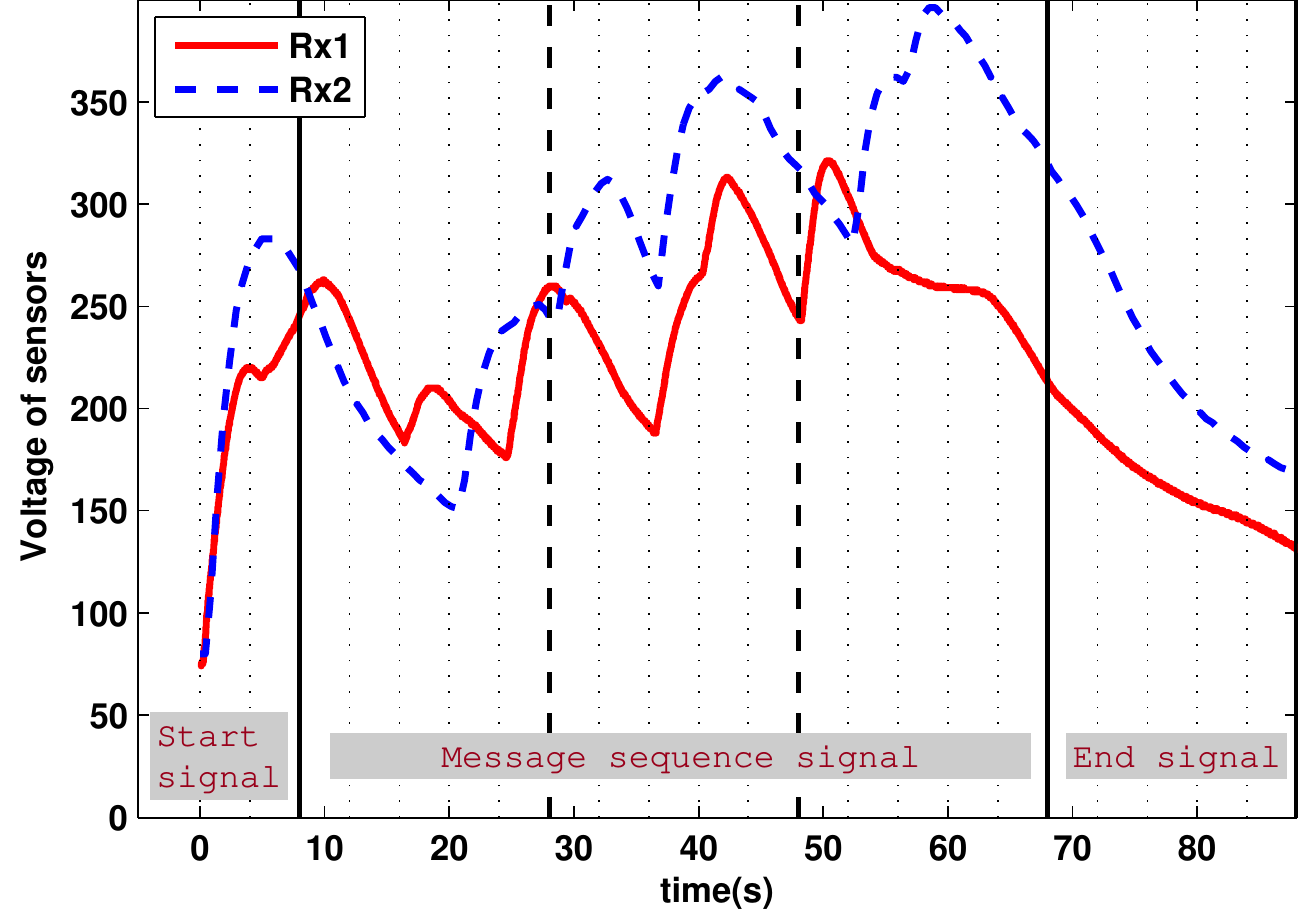}
	\caption{Voltage of sensors when the message `YONSEI' is transmitted.}
	\label{Fig:testbed_reception}
\end{figure}

In Fig.~\ref{Fig:testbed_reception}, the voltage information of each sensor is shown when the message `YONSEI' is sent. Two lines show a change of each receiver sensor's voltage. The first two symbols are the start signal, the next 15 symbols are the message signal, and the final 5 symbols are the end signal. 
\begin{figure}
	\centering
	\includegraphics[width=1\columnwidth,keepaspectratio]
	{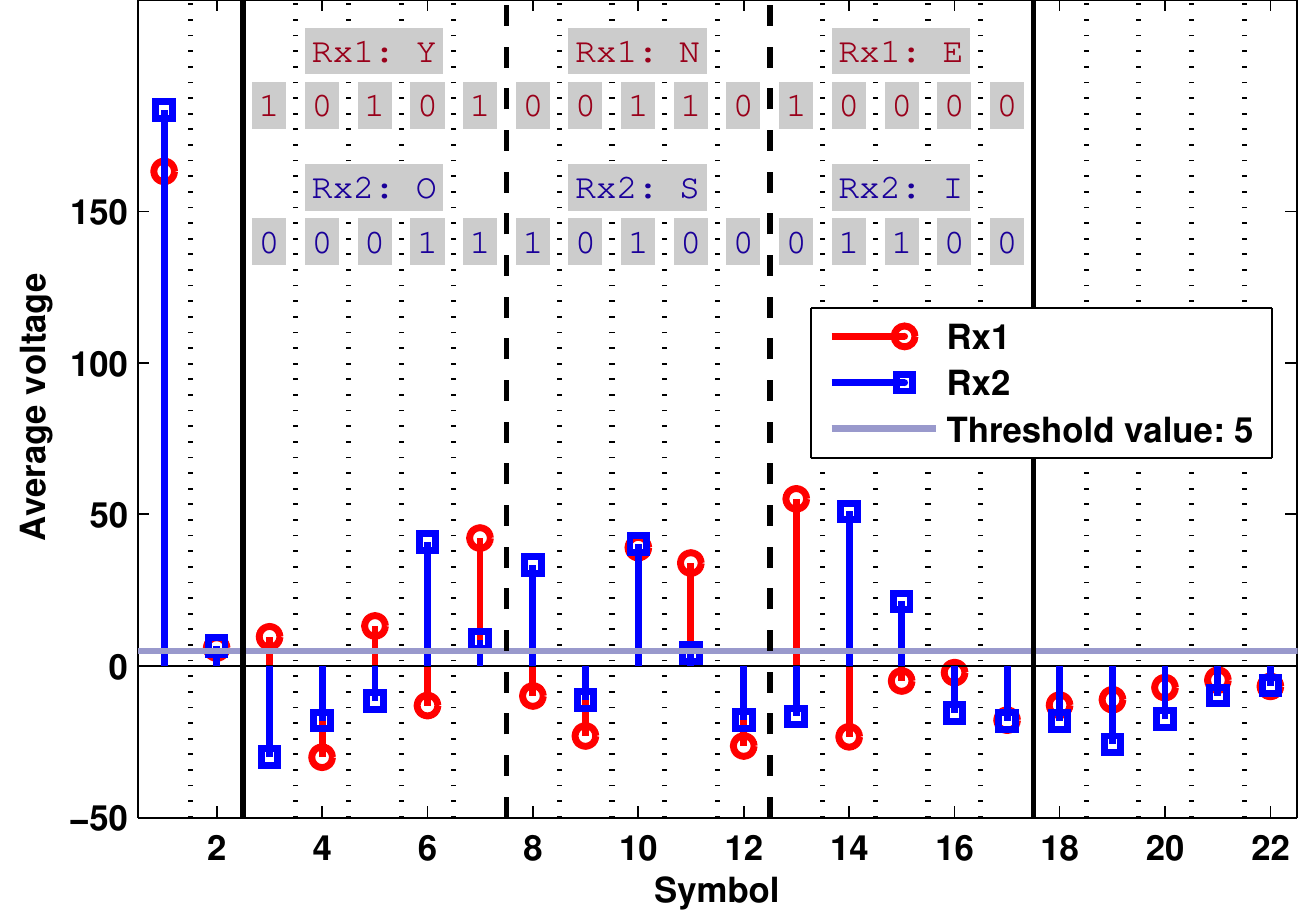}
	\caption{Adaptive thresholding when the message `YONSEI' is sent.}
	\label{Fig:testbed_threshold}
\end{figure}
Fig.~\ref{Fig:testbed_threshold} shows an application of the adaptive thresholding to the signal depicted in Fig.~\ref{Fig:testbed_reception}. After subtracting the first reading in each symbol duration, \txblue{all the sensors' values are averaged respectively over the time of the symbol duration so as to apply the \textit{adaptive thresholding algorithm}}. Then we apply thresholding to the averaged signal value. In Fig.~\ref{Fig:testbed_threshold}, the stem plots with circle and square markers denote data from Rx$_1$ and Rx$_2$. The horizontal thick line indicates the threshold value for the adaptive thresholding algorithm. The threshold value is a bit different from our theoretical analysis shown in the previous sections. This is because the receiver sensors used in our testbed are not ideal and have quite a long turn over time \txblue{($\approx\!100\!\times\!{t}_s$)} to be recovered. Thus for the demonstration, we empirically found the threshold (related to the number of released molecules) for signal detection. 
%%%%%%%%%%%%%%%%%%%%%%%%%%%%%%%%%%%%%%%%%%%%%%%%%%%%%%%%%%%%%%%%%%%%%%%%%%%%%%%%
\begin{table}[t]
	\caption{Data rate and BER comparison}
	\label{tab_data_rate_comparision}
	\centering
	\begin{tabular}{L{0.9cm} L{2.5cm} L{2.1cm} L{1.6cm}}
		\hline
		Type	&    Transmission time (s) & Data rate (bps) & BER\\
		\hline
		SISO	&    148 & 0.20 & $4.412\times10^{-2}$ \\

		MIMO	&    88  &  0.34 & $9.750\times10^{-2}$\\
	
		\hline
	\end{tabular}
\end{table}
%%%%%%%%%%%%%%%%%%%%%%%%%%%%%%%%%%%%%%%%%%%%%%%%%%%%%%%%%%%%%%%%%%%%%%%%%%%%%%%%

\subsection{Demonstration}
For the testbed measurement, we set the distance between transceivers as $90~cm$ and the antenna separation as $40~cm$. \txblue{We used 4 seconds for the symbol duration}. The transmitted message was `YONSEI'. \txblue{Table~\ref{tab_data_rate_comparision} compares the transmission times, the data rates, BER of the SISO and  MIMO systems.} The MIMO system shows a 1.7 times higher data rate than the SISO system. The data rate enhancement is not exactly double due to interference compensation and system overhead (the start and end of communication indicators). \txblue{We also sent consecutive 500 bits each with $\SI{4}{\second}$ symbol duration and we got rough BER values, which is presented in Table~\ref{tab_data_rate_comparision} .}

\txblue{For the MIMO case, the signal-to-ILI (S-ILI) ratio in the current symbol duration is measured by sending bit-1 at Tx$_1$ and bit-0 (silence) at Tx$_2$. We evaluate S-ILI ratio as follows
\begin{align}
\text{S-ILI Ratio} = \sum\limits_{t=0}^{t=t_s}s_{\text{Rx1}}(t) \;/\; \sum\limits_{t=0}^{t=t_s}s_{\text{Rx2}}(t)
\end{align}
where $s_{\text{Rxi}}(t)$ denotes the received signal (i.e., the voltage) at receiver $i$ (Rx$_i$). In other words, we measured the signal at the desired Rx$_i$ and normalize it by ILI term for the current symbol duration. During our tests, we also measured the ${\mbox{S-ILI}}$ and it was $14.567$, which means ILI signal has an average amplitude of nearly the $1/14.567$ of the received signal at the intended antenna in the current symbol slot. }

\section{Conclusions}
\label{Sec:conclusion}
Data rates in molecular communications are affected by interference. Therefore, any enhancement in molecular communications must give precise consideration to interference effects. In this paper, we proposed a MIMO system for MCvD that takes into account inter-symbol and inter-link interference. Moreover, we proposed four symbol detection algorithms that depend on the information set at the receiver. First, we modeled the channel's impulse response by applying least squares curve fitting to the simulation results obtained form our ${\mbox{3-D}}$ MIMO simulator, where the received signal is the fraction of the received molecules. 
%via fitting ${\mbox{3-D}}$ MIMO simulator results considering the fraction of the received molecules. 
We utilized the estimated function (which gives the fraction of received molecules) to 
model interference%determine interference 
and find the optimal thresholds for the 
%proposed methods
detection algorithms. In the performance analysis, we investigated the effect on the SIR of varying topological conditions. The results showed that decreasing the transmitter-receiver distance and increasing the size of the receive antennas are more effective at reducing the interference than is increasing the separation of the antennas. We analytically and numerically investigated the link-level performance of the proposed detection algorithms while varying the number of emitted molecules and the symbol duration. We also implemented the world's first molecular MIMO testbed to verify the proposed concept and achieved transmission rates 1.7 times higher than those 
obtained from
%of 
the molecular SISO system.

%The effect of interference and the stochastical channel model are considered to assure that the MIMO system is applicable withstanding the attenuation.
%It is needed to send extra bits to send the information set to conduct the algorithm, though it can partly degrade the system efficiency.

\section*{Acknowledgment}
C.-B. Chae would like to thank S. Kim and C. Kim for their helpful discussions.

\appendix
\subsection{Proof of Theorem~\ref{theorem_adap&prac}}
\label{sec_app_1}

	The topological symmetry guarantees that the random variables $\mathcal{S}_{11}[k]$ and $\mathcal{S}_{22}[k]$ have equal statistical parameters for a positive integer $k$. They are both  binomial random values with a success probability of $A_k$. The transmitter sends $Q_1$ molecules for transmitting a bit-1. Therefore, the diagonal entries of $\pmb{H}_\mathrm{ex}$ follow a binomial distribution and are approximated to the normal distribution as follows:
	\begin{equation}
	\label{eqn_b2n}
	\pmb{H}_{\mathrm{ex}}(i,i) \sim \mathcal{B}(Q_{1},A_0) \approx \mathcal{N}\left(Q_{1}{A}_{0},Q_{1}A_{0}(1-A_{0})\right)
	\end{equation}
	and the channel mean $\bar{\pmb{H}}_{\mathrm{ex}}$ equals $Q_{1}A_{0}\pmb{E}$ where $\pmb{E}$ denotes a $2\times2$ identity matrix. It leads ${\bar{\pmb{H}}}_{\mathrm{ex}}^{-1}=(1/Q_{1}A_{0})\pmb{E}$ and $\hat{\pmb{y}}_{a}$ in ({\ref{eqn_adaptive}}) to become a multiplication of $A_{0}$ and $\hat{\pmb{y}}_{\mathrm{ex}}$ in ({\ref{eqn_practical}}).

\subsection{Proof of Theorem~\ref{theorem_hq1}}
\label{sec_app_2}

The coefficients $a$ and $c$ are expressed in detail as:
\begin{align}
\nonumber
\begin{split}
a&=\frac{B_{0}^{2}}{4A_{0}^{2}}+\frac{\sum_{i=1}^{n}(A_{i}^{2}+B_{i}^{2})}{4}\cdot\left(\frac{1}{A_{0}^{2}}-\frac{A_{0}^{2}+B_{0}^{2}}{(A_{0}^{2}-B_{0}^{2})^2}\right),\\
%b&=\frac{\sum\limits_{i=1}^{n}(A_{i}-A_{i}^{2}+B_{i}-B_{i}^{2})}{2}\cdot\left(\frac{1}{A_{0}^{2}}-\frac{A_{0}^{2}+B_{0}^{2}}{(A_{0}^{2}-B_{0}^{2})^2}\right)\\
%&+\frac{B_{0}-B_{0}^{2}}{2A_{0}^{2}}-\left(\frac{A_{0}^{2}B_{0}(1-B_{0})+B_{0}^{2}A_{0}(1-A_{0})}{2(A_{0}^{2}-B_{0}^{2})^2}\right)\\
c&=\sigma_{n}^{2}\left(\frac{1}{A_{0}^{2}}-\frac{A_{0}^{2}+B_{0}^{2}}{(A_{0}^{2}-B_{0}^{2})^2}\right).
\end{split}
\end{align}
$A_{0}\!>\!B_{0}$ is trivial from the definition: hence, the equation
\begin{equation}
\nonumber
\left(\frac{1}{A_{0}^{2}}-\frac{A_{0}^{2}+B_{0}^{2}}{(A_{0}^{2}-B_{0}^{2})^2}\right)=\frac{-B_{0}^{2}(3A_{0}^{2}-B_{0}^{2})}{A_{0}^{2}(A_{0}^{2}-B_{0}^{2})^2}
\end{equation}
provides that $c$ is always a negative value. The \textit{acceptable} interference condition leads to
\begin{align}
\nonumber
\begin{split}
a&>\frac{B_{0}^{2}}{4A_{0}^{2}}-\frac{A_{0}^{2}-2B_{0}^{2}}{3}\cdot\frac{B_{0}^{2}(3A_{0}^{2}-B_{0}^{2})}{4A_{0}^{2}(A_{0}^{2}-B_{0}^{2})^2}\\
&=\frac{B_{0}^{2}}{4A_{0}^{2}}\left(1-\frac{(A_{0}^{2}-2B_{0}^{2})(3A_{0}^{2}-B_{0}^{2})}{3(A_{0}^{2}-B_{0}^{2})^2}\right)\\
&=\frac{B_{0}^{2}}{4A_{0}^{2}}\cdot\frac{A_{0}^{2}B_{0}^{2}+B_{0}^{4}}{3(A_{0}^{2}-B_{0}^{2})^2}>0.
\end{split}
\end{align}
In terms of $Q_1$, $h(Q_{1})=0$ has only one positive root since $a$ is positive and $c$ is negative.

\bibliographystyle{IEEEtran}
\bibliography{references_molcom}

\end{document}